\renewcommand{\epsilon}{\varepsilon}
\newcommand{\eps}{\varepsilon}
\newtheorem{theorem}{Theorem}[section]
\newtheorem{definition}{Definition}[section]
\newtheorem{lemma}[theorem]{Lemma}
\newtheorem{remark}[theorem]{Remark}
\newtheorem{corollary}[theorem]{Corollary}
\newenvironment{proof}{\begin{trivlist} \item {\bf Proof:~~}}
   {\qed\end{trivlist}}
\def\FullBox{\hbox{\vrule width 6pt height 6pt depth 0pt}}
\def\qed{\ifmmode\qquad\FullBox\else{\unskip\nobreak\hfil
\penalty50\hskip1em\null\nobreak\hfil\FullBox
\parfillskip=0pt\finalhyphendemerits=0\endgraf}\fi}
\title{ 
Sampling from Log-Concave Distributions with Infinity-Distance Guarantees\footnote{This is the full version of a paper accepted to NeurIPS 2022 \url{https://openreview.net/pdf?id=nP6e73uxd1}}} 
 \author{Oren Mangoubi\\ Worcester Polytechnic Institute \and Nisheeth K. Vishnoi \\ Yale University}
\begin{document}
\date{}
\maketitle

\begin{abstract}
For a $d$-dimensional log-concave distribution $\pi(\theta) \propto e^{-f(\theta)}$ constrained to a convex body $K$, the problem of outputting samples from a distribution $\nu$ which is $\varepsilon$-close in  {infinity-distance} $\sup_{\theta \in K} |\log  \frac{\nu(\theta)}{\pi(\theta)}|$ to $\pi$ arises in differentially private optimization. While sampling within total-variation distance $\varepsilon$ of $\pi$ can be done by algorithms whose runtime depends polylogarithmically on $\frac{1}{\varepsilon}$, prior algorithms for sampling in $\varepsilon$ infinity distance have runtime bounds that depend polynomially on $\frac{1}{\varepsilon}$. We bridge this gap by presenting an algorithm that outputs a point  $\varepsilon$-close to $\pi$ in infinity distance that requires at most  $\mathrm{poly}(\log \frac{1}{\varepsilon}, d)$ calls to a membership oracle for $K$ and evaluation oracle for $f$, when $f$ is Lipschitz. Our approach departs from prior works that construct  Markov chains on a $\frac{1}{\varepsilon^2}$-discretization of $K$ to achieve a sample with $\varepsilon$ infinity-distance error, and present a method to directly convert continuous samples from $K$ with total-variation bounds to samples with infinity bounds. This approach also allows us to obtain an improvement on the dimension $d$ in the running time for the problem of sampling from a log-concave distribution on polytopes $K$ with infinity distance $\varepsilon$, by plugging in TV-distance running time bounds for the Dikin Walk Markov chain.

\end{abstract}

\newpage

\tableofcontents
\newpage

\section{Introduction} \label{sec_intro}

The problem of sampling from a log-concave distribution is as follows: For a convex body $K \subseteq \mathbb{R}^d$ and a convex function $f:K \rightarrow \mathbb{R}$, output a sample $\theta$ from the distribution 
$\pi(\theta) \propto e^{-f(\theta)}$.
This is a basic problem in computer science, statistics, and machine learning, with applications to optimization and integration \cite{applegate1991sampling, lovasz2006fast}, Bayesian statistics \cite{welling2011bayesian}, reinforcement learning \cite{chapelle2011empirical}, and differential privacy \cite{mcsherry2007mechanism,hardt2010geometry,bassily2014private, leake2020polynomial}.
Sampling exactly from $\pi$ is known to be computationally hard for most interesting cases of $K$ and $f$ \cite{DyerFrieze} and, hence, the goal is to output samples from a distribution $\nu$ that is at a small (specified) ``distance'' to $\pi$.
For applications such as computing the integral of $\pi$, bounds in the total variation (TV) distance \cite{applegate1991sampling} or KL divergence  (which implies a TV bound) are sufficient.
In applications such as computing the expectation of a Lipschitz function with respect to  $\pi$, Wasserstein distance may also be sufficient.
In  differentially private optimization \cite{mcsherry2007mechanism, hardt2010geometry, bassily2014private,ganesh2020faster, leake2020polynomial}, one requires bounds on the stronger infinity-distance -- $$\mathrm{d}_\infty(\nu, \pi):= \sup_{\theta \in K} \left|\log  \frac{\nu(\theta)}{\pi(\theta)}\right|$$ -- to guarantee pure differential privacy, and TV, KL, or Wasserstein  bounds are insufficient; see \cite{dwork2014algorithmic}.

 Pure differential privacy (DP) is the strongest notion of DP and has been extensively studied (see e.g. the survey \cite{dwork2014algorithmic}).
It has advantages  over weaker notions of differential privacy. %
E.g., when privacy of ``groups'' of individuals (rather than just single individuals) must be preserved, 
any mechanism which is  (pure) $\eps$-DP (with respect to single individuals), is also $k\eps$-DP with respect to subsets of $k$ individuals.
Motivated by applications to differential privacy, we study the problem of designing efficient algorithms to output samples from a distribution $\nu$ which is $\varepsilon$-close in $\mathrm{d}_\infty$ to $\pi$.

{\bf Related works.} Several lines of work have designed Markov chains that generate samples from distributions that are close to a given log-concave distribution.
These results differ in both their assumptions on the log-density and its support, as well as the distance used to measure closeness.
One line of work includes bounds for sampling from a log-concave distribution on a compactly supported convex body within TV distance $O(\delta)$, including results with running time that is polylogarithmic in $\frac{1}{\delta}$ \cite{applegate1991sampling, lovasz2007geometry, lovasz2006fast, narayanan2017efficient} (as well as other results which give a running time bound that is polynomial in $\frac{1}{\delta}$ \cite{frieze1994sampling,frieze1999log,bubeck2015finite,brosse2017sampling}).
In addition to assuming access to a value oracle for $f$,  some Markov chains just need access to a membership oracle for  $K$ \cite{applegate1991sampling, lovasz2007geometry, lovasz2006fast}, while others assume that $K$ is a given polytope:  $\{\theta \in \mathbb{R}^d: A\theta \leq b\}$ \cite{kannan2012random,narayanan2016randomized, sachdeva2016mixing, narayanan2017efficient, lee2018convergence}.
They often also assume that $K$ is contained in a ball of radius $R$ and contains a ball of smaller radius $r$.  
Many of these results assume
that the target log-concave distribution satisfies a ``well-rounded''  condition which says that the variance of the target distribution is $\Theta(d)$  \cite{lovasz2007geometry, lovasz2006fast}, or that it is in isotropic position (all  eigenvalues of its covariance matrix are $\Theta(1)$) \cite{lee2017eldan}; when applied to log-concave distributions that are not well-rounded or isotropic, these results require a ``rounding'' pre-processing procedure to find a linear transformation which makes the target distribution well-rounded or isotropic.
Finally, it is often  assumed that the function $f$ is such that $f$ is $L$-Lipschitz or $\beta$-smooth \cite{narayanan2017efficient}, including works handling the widely-studied special case when $f$ is uniform on $K$ where $L=\beta=0$ (see e.g. \cite{lovasz2003hit, kannan2012random, narayanan2016randomized, sachdeva2016mixing, lee2018convergence, chen2017vaidya, laddha2020strong}).

Another line of work gives sampling algorithms with bounds on the distance to the target density $\pi$ in terms of Wasserstein distance \cite{durmus2017nonasymptotic, dalalyan2020sampling}, KL divergence \cite{wibisono2018sampling, durmus2019analysis}, and Renyi divergence \cite{vempala2019rapid}. 
In contrast to works which assume access to an oracle for the value of $f$, many of these results instead assume access to an oracle for the gradient of $f$ and require the log-density to be $L$-Lipschitz or $\beta$-smooth on all of $\mathbb{R}^d$ (or on, e.g., a cube containing $K$) for some $L, \beta>0$.
However, as noted earlier, bounds in the Wasserstein distance, KL divergence, and $\alpha$-Renyi divergence (for $\alpha<\infty$) also do not imply bounds on the infinity distance, and the running time bounds provided by these works are polynomial in $\frac{1}{\eps}$. (See also Appendix \ref{sec_challenges} for additional discussion and challenges.)

Among prior works that give  algorithms with bounds on $d_\infty$, \cite{hardt2010geometry}  applies the grid walk Markov chain of \cite{applegate1991sampling} to sample from a uniform distribution on a convex body. 
\cite{bassily2014private} extends the approach of \cite{hardt2010geometry} to log-Lipschitz log-concave distributions. 
Unlike the TV-distance case where algorithms whose running time depends {\em logarithmically} on the error are known (e.g., \cite{applegate1991sampling, lovasz2006fast, narayanan2017efficient}), the best available bounds for sampling within $O(\eps)$ infinity-distance \cite{hardt2010geometry,bassily2014private} have runtime that is polynomial in $\frac{1}{\eps}$ and a relatively large polynomial in $d$.

{\bf Our contributions.} 
We present a new approach to output samples, which come with $\mathrm{d}_\infty$ bounds, from a log-concave and log-Lipschitz distribution constrained to a a convex body.
Specifically, when $K:=\{\theta: A\theta \leq b\}$ is a polytope (where one is given $A$ and $b$) our main result (Theorem \ref{thm_infinity_divergence_sampler}) guarantees samples from a distribution that is within $O(\eps)$  error in $\mathrm{d}_\infty$ and whose runtime depends 
{logarithmically} on $\frac{1}{\eps}$ compared to the polynomial dependence of  \cite{bassily2014private}.
Our approach departs from prior works that construct  Markov chains on a $\frac{1}{\varepsilon^2}$-discretization of $K$ to achieve a sample with $\varepsilon$ infinity-distance error, and we present a method (Algorithm \ref{alg_TV_to_pure}) to directly convert continuous samples from $K$ with total-variation bounds to samples with infinity bounds (Theorem \ref{thm_TV_to_inf_divergence}). 
This continuous-space approach also allows us to obtain an improvement on the dimension $d$ in the running time when $K$ is a polytope by plugging in TV-distance running time bounds for the Dikin Walk Markov chain of \cite{narayanan2017efficient}.
As  immediate applications, we obtain faster algorithms for differentially private empirical risk minimization (Corollary \ref{corr_DP}) and low rank approximation (Corollary \ref{corrolary_low_rank}).
\section{Results}

Let $B(v,s) := \{z \in \mathbb{R}^d : \|z-v\|_2 \leq s\}$ and
$\omega$ denote the matrix-multiplication constant.

\begin{theorem}[Main result]\label{thm_infinity_divergence_sampler}

There exists an algorithm which, given $\epsilon, L, r, R>0$,  $A \in \mathbb{R}^{m \times d}$, $b\in \mathbb{R}^m$ that define a polytope $K := \{\theta \in \mathbb{R}^d : A \theta \leq b\}$  contained in a ball of radius $R$, a point $a \in \mathbb{R}^d$ such that $K$ contains a ball $B(0,r)$ of smaller radius $r$, and an oracle for the value of a convex function $f: K \rightarrow \mathbb{R}^d$, where $f$ is  $L$-Lipschitz, and defining $\pi$ to be the distribution $\pi \propto e^{-f}$, outputs a point from a distribution $\nu$ such that $\mathrm{d}_\infty(\nu, \pi)< \eps$.
        Moreover, with very high probability\footnote{The number of steps is $O(\tau \times T)$,  where $\mathbb{E}[\tau] \leq 3$, $\mathbb{P}(\tau \geq t) \leq \left(\frac{2}{3}\right)^t$ for $t \geq 0$, and $\tau \leq O(d\log(\frac{R}{r}) + LR)$ w.p. 1.}, this algorithm  takes $O(T)$ function evaluations and $O(T\times md^{\omega-1})$ arithmetic operations, where $T=O((m^2d^3 + m^2 d L^2 R^2) \times [LR + d\log(\frac{Rd +LRd}{r \eps})])$. 

\end{theorem}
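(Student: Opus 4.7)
The plan is to combine two ingredients set up elsewhere in the paper: Theorem \ref{thm_TV_to_inf_divergence} together with Algorithm \ref{alg_TV_to_pure}, which turns a black-box TV-distance sampler into an infinity-distance sampler, and the Dikin Walk of \cite{narayanan2017efficient}, which for an $L$-Lipschitz convex $f$ on the polytope $K = \{A\theta \leq b\}$ reaches TV distance $\delta$ to $\pi$ in a number of steps that is \emph{logarithmic} in $1/\delta$ and polynomial in $d, m, L, R, 1/r$. The strategy is to feed Dikin-Walk samples into the conversion: because Dikin Walk depends only on $\log(1/\delta)$, even a $\delta$ that must shrink polynomially (or with a small polynomial-in-$d$ amplification) in $\epsilon, 1/d, 1/R, 1/L, r$ leads to a polylog-in-$1/\epsilon$ and polynomial-in-$d$ cost, which is what the theorem claims.

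\textbf{Step 1 (TV-distance subroutine).} I would invoke the Dikin Walk targeting $\pi \propto e^{-f}$ on $K$, warm-started from a distribution supported in $B(a,r)$ whose warmness is controllable via $R/r$ (using $B(a,r)\subseteq K \subseteq B(0,R)$). By the mixing bound of \cite{narayanan2017efficient}, after a number of steps of the form $N_{\mathrm{mix}}(\delta) = \mathrm{poly}(m,d,L,R)\cdot \log(1/\delta)$ one obtains a sample from some $\mu$ with $\mathrm{TV}(\mu,\pi) \le \delta$. Each step requires $O(1)$ oracle calls to $f$ and $O(md^{\omega-1})$ arithmetic operations (dominated by assembling or factoring the $d\times d$ log-barrier Hessian against $m$ half-space constraints).

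\textbf{Step 2 (Lift TV to infinity distance).} I would then apply Algorithm \ref{alg_TV_to_pure} with the Dikin-Walk routine from Step 1 as its TV-distance oracle. Choosing $\delta$ so that $\log(1/\delta) = \Theta(LR + d\log((Rd+LRd)/(r\epsilon)))$ — exactly the bracketed factor in the theorem statement — Theorem \ref{thm_TV_to_inf_divergence} then guarantees that the output distribution $\nu$ satisfies $d_\infty(\nu,\pi)<\epsilon$. The same theorem supplies the bounds on the number of rejection rounds $\tau$ described in the footnote: $\mathbb{E}[\tau] \le 3$, $\mathbb{P}(\tau\ge t)\le (2/3)^t$, and the deterministic bound $\tau \le O(d\log(R/r) + LR)$ with probability one.

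\textbf{Step 3 (Running-time arithmetic).} With very high probability the overall number of Dikin-Walk steps is at most $\tau \cdot N_{\mathrm{mix}}(\delta)$. Substituting the $\mathrm{poly}(m,d,L,R)$ prefactor from \cite{narayanan2017efficient} and the chosen $\log(1/\delta)$ recovers $T = O((m^2 d^3 + m^2 d L^2 R^2)\cdot [LR + d\log((Rd+LRd)/(r\epsilon))])$ total steps, each contributing $O(1)$ evaluations of $f$ and $O(md^{\omega-1})$ arithmetic operations — matching the theorem. The tail on $\tau$ and the standard high-probability mixing bound together give the ``very high probability'' qualifier.

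The hard part I expect is the parameter compatibility in Step 2: one must verify that the TV tolerance demanded by Theorem \ref{thm_TV_to_inf_divergence} depends on $1/\epsilon$ only \emph{inside} a logarithm, possibly with at most a polynomial-in-$d$ amplification (which is why the bracket contains $d\log(\cdots/\epsilon)$ rather than $\log(\cdots/\epsilon)$). If instead $1/\delta$ had to grow polynomially in $1/\epsilon$ outside the log, the polylog$(1/\epsilon)$ runtime would be lost, so the entire reduction hinges on the log-in-$\epsilon$ design of Algorithm \ref{alg_TV_to_pure}. A secondary concern is maintaining a bounded-warmness warm start across all $\tau$ repetitions, which can be arranged by preparing one warm start at the outset (for example, by sampling from $B(a,r)$ reweighted by $e^{-f}$, whose density is bounded against $\pi$ via $R/r$ and $LR$) and reusing it.
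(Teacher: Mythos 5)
Your proposal follows essentially the same route as the paper: feed Dikin-Walk samples (with the $\mathrm{poly}(m,d,L,R)\cdot\log(w/\delta)$ mixing bound of \cite{narayanan2017efficient}) into Algorithm \ref{alg_TV_to_pure} via Theorem \ref{thm_TV_to_inf_divergence}, choose $\delta$ so that $\log(1/\delta)=\Theta(LR+d\log((Rd+LRd)/(r\eps)))$, and multiply by the (constant in expectation) number of rejection rounds $\tau$. The only small deviation is the warm start: the paper simply resamples $\theta_0\sim\mathrm{Unif}(B(0,r))$ on each call and bounds the warmness by $(R/r)^d e^{LR}$ directly via the Lipschitz property of $f$, rather than preparing a single reweighted warm start to reuse — both work, and the rest of your account matches the paper's argument.
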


\noindent
In comparison to the polynomial in $\frac{1}{\eps}$ runtime bounds of \cite{bassily2014private}, Theorem \ref{thm_infinity_divergence_sampler} guarantees a runtime that is {\em logarithmic} in $\frac{1}{\eps}$, and also improves the dependence on the dimension $d$, in the setting where $K$ is a polytope.
Specifically,  \cite{bassily2014private} show that the number of steps of the grid walk to sample from $\pi$ with infinity-distance error at most $\eps$ is $$O\left(\frac{1}{\eps^2}(d^{10} + d^6L^4R^4)\times \mathrm{polylog}\left( \frac{1}{\eps},\frac{1}{r},R,L,d\right)\right)$$  (Lemma 6.5 in the Arxiv version of \cite{bassily2014private}). 
When applying their algorithm to the setting where $f$ is constrained to a polytope $K =\{x \in \mathbb{R}^d: Ax \leq b\}$,
each step of their grid walk Markov chain requires computing a membership oracle for $K$ and the value of the function $f$.
The membership oracle can be computed in $O(md)$ arithmetic operations.
 Thus the bound on the number of arithmetic operations for each step of their grid walk is $O(md)$ (provided that each function evaluation takes at most $O(md)$ arithmetic operations).
 Thus the bound on the number of arithmetic operations to obtain a sample from $\pi$ is $O(\frac{1}{\eps^2}(md^{11} + md^7L^4R^4)\times \mathrm{polylog}(\frac{1}{\eps},\frac{1}{r}, R,L,d))$.
Thus, Theorem \ref{thm_infinity_divergence_sampler} improves on this bound by a factor of roughly $\frac{1}{\epsilon^2 m^3}d^{8-\omega}$.  For example, when $m=O(d)$, as may be the case in differentially private applications, the improvement is   $\frac{1}{\epsilon^2}d^{5-\omega}$.

We note that the bounds of \cite{bassily2014private} also apply in the more general setting where $K$ is a convex body with membership oracle. 
 One can extend our bounds to achieve a runtime that is logarithmic in $\frac{1}{\epsilon}$ (and polynomial in $d,L,R$) in the more general setting where $K$ is a convex body with membership oracle; we omit the details (see Remark \ref{rem_membership}).

Moreover, we also note that 
 while there are several results which achieve $O(\delta)$ TV bounds in time logarithmic in $\frac{1}{\delta}$, TV bounds do not in general imply $O(\epsilon)$ bounds on the KL or Renyi divergence, or on the infinity-distance, for any $\delta>0$.\footnote{For instance, if $\pi(\theta) = 1$ with support on $[0,1]$, for every $\delta>0$ there is a distribution $\nu$ where $\|\nu-\pi\|_{\mathrm{TV}} \leq 2\delta$ and yet $\mathrm{d}_{\infty}(\nu, \pi) \geq D_{\mathrm{KL}}(\nu,\pi) \geq \frac{1}{2}$.  ($\nu(\theta) = e^{\frac{1}{\delta}}$ on $\theta \in [0,\delta e^{-\frac{1}{\delta}}]$,  $\nu(\theta) = \frac{1-\delta}{1-\delta e^{-\frac{1}{\delta}}}$ on $(\delta e^{-\frac{1}{\delta}},1]$ and $\nu(\theta) = 0$ otherwise)}
 On the other hand, an $\epsilon$-infinity-distance bound does immediately imply a bound of $\epsilon$ on the KL divergence $D_\mathrm{KL}$, and $\alpha$-Renyi divergence $D_\alpha$, since  $D_{\mathrm{KL}}(\mu,\pi) \leq \mathrm{d}_{\infty}(\mu,\pi)$ and $D_\alpha(\mu,\pi) \leq \mathrm{d}_{\infty}(\mu,\pi)$ for any $\alpha>0$ and any pair of distributions $\mu,\pi$. 
 Thus, under the same assumptions on $K$ and $f$, Theorem \ref{thm_infinity_divergence_sampler}  implies a method of sampling from a Lipschitz concave log-density on $K$ with $\epsilon$ KL and Renyi divergence error in a number of arithmetic operations that is {\em logarithmic} in $\frac{1}{\epsilon}$, with the same bound on the number of arithmetic operations.

The polynomial dependence on $\frac{1}{\eps}$ in   \cite{bassily2014private} is due to the fact that they rely on a discrete-space Markov chain \cite{applegate1991sampling}, on a grid with cells of width $w = O(\frac{\eps}{L \sqrt{d}})$, to sample from $\pi$ within $O(\eps)$  infinity-distance.
Since their Markov chain's runtime bound is polynomial in $w^{-1}$, they get a runtime bound for sampling within $O(\eps)$ infinity-distance that is polynomial in $\frac{1}{\eps}$.
The proof of Theorem \ref{thm_infinity_divergence_sampler} bypasses the use of discrete grid-based Markov chains by introducing Algorithm \ref{alg_TV_to_pure} which transforms any sample within  $\delta = O(\eps e^{-d-nLR})$ TV distance of the distribution $\pi \propto e^{-f}$ on the {\em continuous} set $K$ (as opposed to a discretization of $K$), into a sample within $O(\eps)$ infinity-distance from $\pi$.
This allows us to make use of a continuous-space Markov chain, whose step size is not restricted to a grid of width $O(\frac{\eps}{L \sqrt{d}})$ and is instead independent of $\eps$, to obtain a sample within $O(\eps)$ infinity-distance from $\pi$ in time that is {logarithmic} in $\frac{1}{\eps}$.

\begin{theorem} [Main technical contribution: Converting TV bounds to infinity-distance bounds]\label{thm_TV_to_inf_divergence}
There exists an algorithm (Algorithm \ref{alg_TV_to_pure}) which, given $\epsilon,r,R,L >0$, a membership oracle for a convex body $K$ contained in a ball of radius $R$ and containing a ball $B(0,r)$, and an oracle which outputs a point from a distribution $\mu$ which has TV distance $$\delta \leq O\left(\epsilon\times \left(\frac{R(d \log(\nicefrac{R}{r})+LR)^2}{\eps r}\right)^{-d} e^{-LR}\right)$$  from a distribution $\pi \propto e^{-f}$ where $f: K \rightarrow \mathbb{R}$ is an   $L$-Lipschitz function (see Appendix \ref{sec_TV_proof} for the exact values of $\delta$ and related  hyper-parameters), 
outputs a point $\hat{\theta} \in K$ such that the distribution $\nu$ of $\hat{\theta}$ satisfies $\mathrm{d}_{\infty}(\nu, \pi) \leq \epsilon$.
Moreover, with very high probability\footnote{Algorithm \ref{alg_TV_to_pure} finishes in  $\tau$ calls to the sampling and membership oracles, plus $O(\tau d)$ arithmetic operations, where $\mathbb{E}[\tau] \leq 3$ and $\mathbb{P}(\tau \geq t) \leq \left(\frac{2}{3}\right)^t$ for all $t \geq 0$ and $\tau \leq  5d \log(\frac{R}{r}) + 5 LR + 2$ w.p. 1.}, this algorithm finishes in $O(1)$ calls to the sampling and membership oracles, plus $O(d)$ arithmetic operations.
        
\end{theorem}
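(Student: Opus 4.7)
The plan is to post-process the sample $X \sim \mu$ by a short random transformation that (i) smooths out any adversarial spikes in $\mu$ consistent with the TV bound and (ii) keeps the output inside $K$, so that the resulting distribution $\nu$ is pointwise close to $\pi$. Concretely, I would draw $X$ from the $\mu$-oracle, sample a uniform perturbation $\eta \sim U(B(0,s))$ for a carefully chosen radius $s$, and set $\hat\theta = X + \eta$, resampling when $\hat\theta \notin K$ (verified by one membership query). Drawing $\eta$ from a ball of radius $s$ costs $O(d)$ arithmetic, and the restart loop has a geometric tail once $s$ is chosen small enough relative to the inner radius $r$ that $\pi$ puts only a small mass on the thin boundary shell $\{\theta \in K : d(\theta, \partial K) < s\}$.

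\paragraph{Analysis of the output density.}
Writing $V(\hat\theta) := |B(\hat\theta, s) \cap K|$ and $P$ for the per-iteration acceptance probability, the conditional density of the output is $\nu(\hat\theta) = \mathbf{1}[\hat\theta \in K] \, \mu(B(\hat\theta, s)) / (\mathrm{vol}(B(s)) \cdot P)$. I would bound $\nu(\hat\theta)/\pi(\hat\theta)$ in two steps. First, using $\|\mu - \pi\|_{\mathrm{TV}} \leq \delta$ gives $\mu(B(\hat\theta, s)) = \pi(B(\hat\theta, s)) \pm \delta$. Second, the $L$-Lipschitz property of $f$ yields $\pi(B(\hat\theta, s)) \in [e^{-Ls}, e^{Ls}] \cdot \pi(\hat\theta) \cdot V(\hat\theta)$. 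Choosing $Ls \leq \eps$ and $\delta \leq \eps \cdot \mathrm{vol}(B(s)) \cdot \pi_{\min}$---which is exponentially small in $d$ because $\mathrm{vol}(B(s)) \pi_{\min}$ scales as $(s/R)^d \, e^{-LR}$ via $\pi_{\min} \geq e^{-LR}/\mathrm{vol}(K)$---reduces the ratio to $V(\hat\theta)/(\mathrm{vol}(B(s)) P) \cdot (1 \pm O(\eps))$, which is $1 \pm O(\eps)$ whenever $\hat\theta$ sits at distance $\geq s$ from $\partial K$.

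\paragraph{Main obstacle: the boundary of $K$.}
The delicate part is that when $\hat\theta$ lies within distance $s$ of $\partial K$, the ratio $V(\hat\theta)/\mathrm{vol}(B(s))$ can be as small as $\sim 1/2$ in the worst case (and much smaller at ``pointy'' regions) while $P$ stays close to $1$, so the naive bound on $\nu/\pi$ drops far below $e^{-\eps}$ at boundary points. To repair this I would modify the perturbation step to exploit the known inner ball $B(a,r) \subseteq K$: rather than a plain translation, replace $X+\eta$ by a convex combination of $X$ with a uniformly random point of $B(a,r)$, scaled by a small parameter $t$ so that the output is automatically in $K$ without rejection and the effective $V$ equals $\mathrm{vol}(B(s))$ uniformly. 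The main technical work is then to show that the pushforward of $\pi$ under this affine mixing step is itself within $O(\eps)$ of $\pi$ in infinity distance---controlled by $Lt R$ together with the contraction ratios involving $r/R$---and to propagate the TV error $\delta$ through the modified kernel. Truncating the rejection loop at $T = 5d\log(R/r) + 5LR + 2$ ensures that the small forced-output probability is absorbed into $\eps$, yielding both the stated bound $\delta \leq O(\eps \, (R(d\log(R/r)+LR)^2/(\eps r))^{-d} \, e^{-LR})$ and the geometric tail on $\tau$ with mean at most $3$.
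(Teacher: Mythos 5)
Your overall plan is aligned with the paper's starting point: smooth the input $\mu$ by convolving with a uniform ball of radius $s$ so that the TV slack $\delta$ cannot hide sharp density spikes, pick $s$ small enough that the Lipschitz term $e^{Ls}$ is $1 \pm O(\eps)$, and note that the only hard part is the $s$-shell near $\partial K$. Up to this point you are reproducing the paper's ``first attempt.'' But the fix you propose for the boundary problem does not work, and the fix is precisely where the main technical content lies.

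You suggest replacing the translated sample $X + \eta$ by a convex combination $\hat\theta = (1-t)X + tU$ with $U \sim \mathrm{Unif}(B(a,r))$, so that $\hat\theta$ lands in $K$ automatically by convexity and no rejection is needed. The difficulty is that this map \emph{shrinks} the reachable set. The support of the resulting distribution $\nu$ is the Minkowski sum $(1-t)K \oplus tB(a,r)$, which for a polytope (or any convex body with non-smooth boundary) is a strict subset of $K$ whenever $d \geq 2$: an extreme point $v$ of $K$ with $v \notin B(a,r)$ cannot be written as $(1-t)x + tu$ with $x \in K$, $u \in B(a,r)$, $t \in (0,1)$, and in fact an entire neighborhood of $v$ in $K$ is missed (for $K = [0,1]^d$ with inscribed ball $B((\tfrac12,\dots,\tfrac12),\tfrac12)$, the corner $(1,\dots,1)$ is at Euclidean distance $\Theta(t\sqrt{d})$ from the Minkowski sum). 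So $\nu$ has density $0$ on a region where $\pi > 0$, giving $\mathrm{d}_\infty(\nu,\pi) = \infty$. Your claim that ``the pushforward of $\pi$ under this affine mixing step is itself within $O(\eps)$ of $\pi$ in infinity distance'' is therefore false, and choosing $t$ small makes the unreached region smaller but never empty. Shrinking toward the inner ball is structurally the wrong direction: it cannot cover corners.

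The paper does essentially the opposite. After adding noise $Z = \theta + \Delta r\,\xi$, it \emph{stretches outward} by setting $\hat\theta = \tfrac{1}{1-\Delta}Z$, and then \emph{rejects} if $\hat\theta \notin K$, retrying the whole loop. The crucial geometric fact (Lemma \ref{Lemma_cvx_hull}, proved by a similar-triangles argument using the tangent line from $\hat\theta$ to $B(0,r)$) is that $\tfrac{1}{1-\Delta}Z \in K$ forces $Z \in \mathrm{int}_{\Delta r}(K)$, i.e.\ $(1-\Delta)K \subseteq \mathrm{int}_{\Delta r}(K)$. So accepted $Z$'s automatically sit at distance $\geq \Delta r$ from $\partial K$, exactly where your convolution bound already works cleanly, while the stretch is a bijection from $(1-\Delta)K$ \emph{onto} all of $K$, so the output density is positive (and $e^{\pm O(\eps)}$ times $\pi$) at every point of $K$ including the corners. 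The rejection you tried to avoid is what pays for full coverage; the acceptance probability is bounded below by $(1-\Delta)^d e^{-2L\Delta R} \geq \tfrac{9}{10}$ for the paper's choice of $\Delta$ (Lemma \ref{lemma_rejection_probability}), which yields the geometric tail and $\mathbb{E}[\tau] \leq 3$ you asserted, and the $\tau_{\max}$ truncation with a fallback uniform draw from $B(a,r)$ contributes only an $e^{\pm O(\eps)}$ multiplicative perturbation because that ball carries at least $(r/R)^d e^{-LR}$ of the mass of $\pi$. You had the right obstacle and the right hyperparameter scalings, but the central stretch-then-reject construction and the similar-triangle lemma that make the boundary region tractable are missing from your argument.
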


\noindent
To the best of our knowledge Theorem \ref{thm_TV_to_inf_divergence} is the first result which for any $\eps, L, r,R>0$, when provided as input a sample from a continuous-space distribution on a convex body $K$ within some TV distance $\delta = \delta(\eps,r,R,L)>0$ from a given $L$-log-Lipschitz distribution $\pi$ on $K$, where $K$ is contained in a ball of radius $R$ and containing a ball of smaller radius $r$, outputs a sample with distribution within infinity-distance $O(\eps)$ from $\pi$.
This is in contrast to previous works \cite{bassily2014private} (see also \cite{hardt2010geometry} which applies only to the special case where $\pi$ is the uniform distribution on $K$) which instead require as input a sample with bounded TV distance from the restriction of $\pi$ on a {\em discrete} grid on $K$, and then convert this discrete-space sample into a sample within infinity-distance $O(\eps)$ from the continuous-space distribution $\pi: K \rightarrow \mathbb{R}$.

\begin{algorithm}[H]
\caption{Interior point TV to infinity-distance converter} \label{alg_TV_to_pure}
\KwIn{$d\in \mathbb{N}$}
\KwIn{A membership oracle for a convex body $ K \in \mathbb{R}^d$ and an $r>0$ such that  $B(0,r) \subseteq K$.}
\KwIn{A sampling oracle which outputs a point from a distribution $\mu:K \rightarrow \mathbb{R}$}%

 \KwOut{A point $\hat{\theta} \in K$.} 
  
\textbf{Hyperparameters:} $\Delta >0$, $\tau_{\mathrm{max}}\in \mathbb{N}$ (set in Appendix \ref{sec_TV_proof})

\For{$i = 1, \ldots, \tau_{\mathrm{max}}$}{

Sample a point $\theta \sim \mu$ \label{Line_sampling_oracle}

Sample a point $\xi \sim \mathrm{Unif}(B(0,1))$ \label{Line_Gaussian_sampling}

Set $Z \leftarrow \theta + \Delta r \xi$

Set $\hat{\theta} \leftarrow \frac{1}{1 - \Delta}Z$

If $\hat{\theta} \in K$, output $\hat{\theta}$ with probability $\frac{1}{2}$ and halt.  Otherwise, continue.} \label{Line_membership_oracle}

Sample a point $\hat{\theta} \sim \mathrm{Unif}(B(0,r))$

Output $\hat{\theta}$ %$\perp$

\end{algorithm}

\begin{remark}[Extension to convex bodies with membership oracles] \label{rem_membership}
We note that Theorem \ref{thm_infinity_divergence_sampler} can be extended to the  general setting where $K$ is an arbitrary convex body in a ball of radius $R$ and containing a ball of smaller radius $r$, and we only have membership oracle access to $K$.
Namely, one can plug in the results of \cite{lovasz2006fast} to our Theorem \ref{thm_TV_to_inf_divergence} to generate a sample from a $L$-Lipschitz concave log-density on an arbitrary convex body $K$ in a number of operations that is (poly)-logarithmic in $\frac{1}{\epsilon}, \frac{1}{r}$ and polynomial on $d, L, R$.
We omit the details.
\end{remark}

\paragraph{Applications to differentially private optimization.}
Sampling from distributions with $O(\epsilon)$  infinity-distance error has many applications to differential privacy.
Here, the goal is to find a randomized mechanism $h: \mathcal{D}^n \rightarrow \mathcal{R}$ which, given a dataset $x \in \mathcal{D}^n$ consisting of $n$ datapoints, outputs model parameters $\hat{\theta} \in \mathcal{R}$ in some parameter space $\mathcal{R}$, which minimize a given (negative) utility function $f(\theta, x)$,
under the constraint that the output $\hat{\theta}$ preserves the pure $\epsilon$-differential privacy of the data points $x$.
A randomized mechanism $h: \mathcal{D}^n \rightarrow \mathcal{R}$ is  said to be  $\epsilon$-differentially private if for any datasets $x, x'  \in \mathcal{D}$ which differ by a single datapoint, and any $S \subseteq \mathcal{R}$, we have that $$\mathbb{P}(h(x) \in S) \leq e^{\epsilon} \mathbb{P}(h(x') \in S);$$ see \cite{dwork2014algorithmic}.

As one application  of Theorem \ref{thm_infinity_divergence_sampler}, we consider the problem of finding an (approximate) minimum $\hat{\theta}$ of an empirical risk function $f:  K  \times \mathcal{D}^n \rightarrow \mathbb{R}$ under the constraint that the output $\hat{\theta}$ is $\epsilon$-differentially private,   where  $f(\theta, x) := \sum_{i=1}^n \ell_i(\theta,x_i)$.
 Following \cite{bassily2014private}, we assume that the $\ell_i(\cdot, x)$ are $L$-Lipschitz for all $x\in \mathcal{D}^n$, $i \in \mathbb{N}$, for some given $L>0$.
In this setting \cite{bassily2014private} show that the minimum ERM utility bound under the constraint that $\hat{\theta}$ is pure $\epsilon$-differentially private,   $\mathbb{E}_{\hat{\theta}}[f(\hat{\theta},x)] - \min_{\theta \in K} f(\theta,x) = \Theta(\frac{d L R}{\epsilon})$,
 is achieved if one samples $\hat{\theta}$ from the exponential mechanism $\pi \propto e^{- \frac{\eps}{2LR}f}$ with infinity-distance error at most $O(\eps)$.
Plugging Theorem \ref{thm_infinity_divergence_sampler} into the framework of the exponential mechanism, we obtain a pure $\eps$-differentially private mechanism which achieves the minimum expected risk (Corollary \ref{corr_DP}, see Section \ref{Sec_DP_ERM} for a proof).

\begin{corollary}[Differentially private empirical risk minimization]\label{corr_DP}
There exists an\, algorithm which, given $\epsilon, L, r, R>0$,  $A \in \mathbb{R}^{m \times d}$, $b\in \mathbb{R}^m$ that define a polytope $K := \{\theta \in \mathbb{R}^d : A \theta \leq b\}$  contained in a ball of radius $R$ and containing a ball $B(0,r)$ of smaller radius $r$,
and a convex function $f(\theta, x) := \sum_{i=1}^n \ell_i(\theta,x_i)$, where each $\ell_i: K \rightarrow \mathbb{R}$ is $L$-Lipschitz,
outputs a random point $\hat{\theta} \in K$ which is pure $\eps$-differentially private and satisfies $$\mathbb{E}_{\hat{\theta}}[f(\hat{\theta},x)] - \min_{\theta \in K} f(\theta,x)\leq O\left(\frac{d L R}{\epsilon}\right).$$
Moreover, this algorithm takes at most   $T \times md^{\omega-1}$ arithmetic operations plus $T$ evaluations of the function $f$, where $T = O\left((m^2d^3 + m^2d n^2 \eps^2\right) \times (\eps n + d) \mathrm{log}^2(\frac{nRd}{r\eps}))$.
\end{corollary}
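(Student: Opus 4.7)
The plan is to instantiate the exponential mechanism using the infinity-distance sampler of Theorem \ref{thm_infinity_divergence_sampler} in place of the grid walk used by \cite{bassily2014private}.

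\textbf{Setup.} Each $\ell_i(\cdot,x_i)$ is $L$-Lipschitz on $K\subseteq B(0,R)$, so $f(\cdot,x)=\sum_{i=1}^n \ell_i(\cdot,x_i)$ is $nL$-Lipschitz and convex, and for any two datasets $x,x'$ differing in a single entry one has $\sup_{\theta\in K}|f(\theta,x)-f(\theta,x')|\leq 2LR$ (the single differing loss changes by at most $L\cdot \mathrm{diam}(K) \leq 2LR$). I set the inverse temperature to $\eta:=\epsilon/(8LR)$ and define $g:=\eta f$; then $g$ is convex and $L':=\epsilon n/(8R)$-Lipschitz, and the target distribution $\pi\propto e^{-g}$ is the exponential mechanism (after rescaling).

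\textbf{Sampling step.} Apply Theorem \ref{thm_infinity_divergence_sampler} to the pair $(K,g)$ with target infinity-distance error $\epsilon/4$ and Lipschitz parameter $L'$; it returns a point $\hat\theta$ whose law $\nu$ satisfies $d_\infty(\nu,\pi)\leq\epsilon/4$. Substituting $L'R=\epsilon n/8$ into the runtime bound of Theorem \ref{thm_infinity_divergence_sampler} and absorbing the $L'R$-term in $[L'R+d\log(\cdot)]$ into an $(\epsilon n+d)$-factor yields, up to logarithmic factors, the claimed $T=O((m^2d^3+m^2dn^2\epsilon^2)(\epsilon n+d)\log^2(nRd/(r\epsilon)))$ count of arithmetic operations and $T$ function evaluations (each evaluation of $f$ costs $n$ evaluations of the individual $\ell_i$).

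\textbf{Privacy and utility.} For neighbours $x\sim x'$ and any measurable $S\subseteq K$, the exact exponential-mechanism density satisfies $\pi_x(S)/\pi_{x'}(S)\leq e^{2\eta\cdot 2LR}=e^{\epsilon/2}$ by the standard scoring argument. Composing with the two $e^{\epsilon/4}$ slacks introduced by the infinity-distance bound on $\nu$ (one on the numerator, one on the denominator) gives $\mathbb{P}(\hat\theta\in S\mid x)/\mathbb{P}(\hat\theta\in S\mid x')\leq e^{\epsilon}$, certifying pure $\epsilon$-DP. For utility, the standard log-concave exponential-mechanism bound (Theorem 3.4 of \cite{bassily2014private}) gives $\mathbb{E}_\pi[f]-\min_K f=O(d/\eta)=O(dLR/\epsilon)$; since $d_\infty(\nu,\pi)\leq 1$, densities are comparable up to a factor of $e$, so $\mathbb{E}_\nu[f]-\min_K f\leq e\cdot(\mathbb{E}_\pi[f]-\min_K f)=O(dLR/\epsilon)$.

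\textbf{Main obstacle.} All the serious analytic work is already contained in Theorem \ref{thm_infinity_divergence_sampler}, so the corollary is essentially a bookkeeping exercise. The only step that requires care is the runtime substitution: verifying that rescaling the Lipschitz constant to $L'=\epsilon n/(8R)$ produces exactly the $n,\epsilon,d,m,R,r$ dependencies stated in the corollary once the bracketed $[L'R+d\log(\cdot)]$ factor in Theorem \ref{thm_infinity_divergence_sampler} is consolidated into the $(\epsilon n+d)\log$ form.
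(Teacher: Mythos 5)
Your proposal takes the same high-level route as the paper: instantiate the exponential mechanism at inverse temperature $\Theta(\epsilon/(LR))$, note the rescaled log-density has Lipschitz constant $\Theta(\epsilon n/R)$, and feed this into Theorem \ref{thm_infinity_divergence_sampler}. Your privacy argument (splitting the $\epsilon$ budget between the exact exponential mechanism and the two infinity-distance slacks) and your utility argument (comparing $\mathbb{E}_\nu[f]$ to $\mathbb{E}_\pi[f]$ via a factor of $e$) are both fine and are essentially the same content as Lemma \ref{Lemma_exponential_mechanism}.

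The place where your write-up has a real gap is precisely the step you flag as ``bookkeeping'' and then wave away with ``up to logarithmic factors.'' Substituting $L'R = \Theta(\epsilon n)$ into the bound of Theorem \ref{thm_infinity_divergence_sampler} produces only a single $\log(\cdot)$ factor, whereas the corollary claims $\log^2(\cdot)$; moreover the corollary asserts a \emph{deterministic} (worst-case) operation count, while the theorem only gives it with high probability, with $\tau$ as large as $O(d\log(R/r) + L'R) = O(d\log(R/r) + \epsilon n)$ in the worst case. The paper's proof resolves both issues with one device you are missing: it explicitly truncates Algorithm \ref{alg_TV_to_pure} after $t = O(\log(d/\epsilon))$ iterations and outputs the fixed point $0 \in K$ if the loop has not halted by then. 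This truncation (i) caps the number of Dikin-walk calls at $O(\log(\cdot))$, which is the origin of the second $\log$ in $\log^2$, (ii) yields a worst-case rather than high-probability runtime, and (iii) forces a small extra term in the utility bound, $\mathbb{P}(\tau > t)\cdot[f(0,x) - \min_K f] \leq (2/3)^{t+1}\cdot 2nLR$, which must be verified to be $O(dLR/\epsilon)$ after choosing $t$. None of this appears in your proposal, so the runtime claim and the worst-case guarantee do not follow from what you wrote.

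Two minor notes: the default output $0 \in K$ in the truncated branch is data-independent, so it does not affect pure DP, but this does need to be stated; and your choice of inverse temperature $\epsilon/(8LR)$ versus the paper's $\epsilon/(2LR)$ is an immaterial constant-factor difference.
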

Corollary \ref{corr_DP} improves on the previous  bound \cite{bassily2014private} of $O((\frac{1}{\epsilon^2}(m+n)d^{11}+ \epsilon^2 n^4 (m + n) d^7)  \times \mathrm{polylog}(\frac{nRd}{r\eps})))$ arithmetic operations by a factor of roughly $\max\left(\frac{d^{8-\omega}}{\epsilon^2 m^2},  \frac{1}{\epsilon m^2}n d^{5}\right)$, in the setting where the  $\ell_i$ are $L$-Lipschitz on a polytope $K$ and each $\ell_i$ can be evaluated in $O(d)$ operations. See Appendix \ref{Sec_DP_ERM} for a proof of this corollary.

As another application, we consider the problem of finding a low-rank approximation of a sample covariance matrix $\Sigma = \sum_{i=1}^n u_i u_i^\top$ where the datapoints $u_i \in \mathbb{R}^d$ satisfy $\|u_i\| \leq 1$, in a differentially private manner. 
Given any $k>0$, the goal is to find a (random) rank-$k$ projection matrix $P$ which maximizes the average variance $\mathbb{E}_P[\langle \Sigma, P \rangle]$ of the matrix $\Sigma$ (also reffered to as the utility of $P$), under the constraint that the mechanism which outputs the matrix $P$ is $\varepsilon$-differentially private.
This problem has many applications to statistics and machine learning, including differentially private principal component analysis (PCA) \cite{chaudhuri2012near, blum2005practical, dwork2014analyze, leake2020polynomial}. 

When privacy is not a concern, the solution $P$ which maximizes the variance is just a projection matrix onto the subspace spanned by top-$k$ eigenvectors of $\Sigma$, and the maximum variance satisfies $\langle \Sigma, P \rangle = \sum_{i=1}^k \lambda_i$, where $\lambda_1 \geq \cdots \geq \lambda_d >0$ denote the eigenvalues of $\Sigma$.
However, when privacy is a concern, there is a tradeoff between  the desired privacy level $\varepsilon$ and the utility $\mathbb{E}_P[\langle \Sigma, P \rangle]$, and the maximum utility $\mathbb{E}_P[\langle \Sigma, P \rangle]$ one can achieve decreases with the privacy parameter $\varepsilon$.
The best current utility bound for an $\varepsilon$-differentially private low rank approximation algorithm was achieved in \cite{leake2020polynomial}, who show that one can find a pure $\epsilon$-differentially private random rank-$k$ projection matrix $P$ such that $\mathbb{E}_P[\langle \Sigma, P \rangle] \geq (1-\delta) \sum_{i=1}^k \lambda_i$ whenever  $\sum_{i=1}^k \lambda_i \geq \Omega\left(\frac{dk}{\epsilon \delta}  \log \frac{1}{\delta}\right)$ for any $\delta>0$.
To generate the matrix $P$, their algorithm generates a sample, with infinity-distance error $O(\epsilon)$, from a Lipschitz concave log-density on a polytope, and transforms this sample into a projection matrix $P$.
The sampling algorithm used in \cite{leake2020polynomial} has a bound of $\mathrm{poly}( \frac{1}{\epsilon}, d, \lambda_1 - \lambda_d)$ arithmetic operations and they
%
%The authors 
leave as an open problem  whether this can be improved from a polynomial dependence on $\frac{1}{\epsilon}$ to a logarithmic dependence on $\frac{1}{\epsilon}$.
Corollary \ref{corrolary_low_rank} shows that  a direct application of Theorem \ref{thm_infinity_divergence_sampler} resolves this problem. (See Section \ref{sec_proof_of_low_rank_DP} for a proof.)

\begin{corollary}[Differentially private low rank approximation]\label{corrolary_low_rank}
There exists an algorithm which, given a sample covariance matrix $\Sigma = \sum_{i=1}^n u_i u_i^\top$ for datapoints $u_i \in \mathbb{R}^d$ satisfying $\|u_i\| \leq 1$, its eigenvalues $\lambda_1 \geq \ldots \lambda_d >0$, an integer $k$, and $\epsilon, \delta>0$, outputs a random rank-k symmetric projection matrix $P$ such that $P$ is $\epsilon$-differentially private and satisfies the utility bound $$\mathbb{E}_P[\langle \Sigma, P \rangle] \geq (1-\delta) \sum_{i=1}^k \lambda_i(\Sigma)$$ whenever  $\sum_{i=1}^k \lambda_i(\Sigma) \geq C \frac{dk}{\epsilon \delta} \log \frac{1}{\delta}$ for some universal constant $C>0$.
Moreover the number of arithmetic operations is logarithmic in $\frac{1}{\epsilon}$ and polynomial in $d$ and $\lambda_1 - \lambda_d$.
\end{corollary}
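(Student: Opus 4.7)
The plan is to invoke the reduction already established by \cite{leake2020polynomial} and then substitute our improved sampler (Theorem \ref{thm_infinity_divergence_sampler}) as a black box in place of their $\poly(\tfrac{1}{\epsilon})$ sampling routine. Concretely, \cite{leake2020polynomial} show that an exponential mechanism of the form $P \sim \pi$, where $\pi(P) \propto \exp(\tfrac{\epsilon}{2}\langle \Sigma, P\rangle)$ over an appropriate polytope relaxation of the set of rank-$k$ symmetric projections, is $\epsilon$-differentially private and attains the utility bound $\mathbb{E}[\langle \Sigma, P\rangle] \geq (1-\delta)\sum_{i=1}^k \lambda_i$ under the stated eigenvalue condition $\sum_{i=1}^k \lambda_i \geq C\tfrac{dk}{\epsilon \delta}\log \tfrac{1}{\delta}$ \emph{provided} the sampler produces a draw whose distribution is within infinity-distance $O(\epsilon)$ of $\pi$. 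We will simply plug in our sampler to achieve the latter guarantee in time polylogarithmic in $\tfrac{1}{\epsilon}$.

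The main steps will be: (i) import the polytope formulation of \cite{leake2020polynomial}, which expresses the target distribution as a log-concave density $\pi \propto e^{-f}$ on a polytope $K \subseteq \mathbb{R}^{\poly(d)}$ described by $m = \poly(d)$ linear constraints, with $K$ sandwiched between balls of radii $r$ and $R$ that are both $\poly(d, \lambda_1 - \lambda_d)$; (ii) bound the Lipschitz constant of $f$, which after the $\tfrac{\epsilon}{2}$ rescaling is proportional to $\epsilon \cdot (\lambda_1 - \lambda_d)$ (the variation of $\tfrac12 \langle \Sigma, P\rangle$ across feasible $P$), so that $L \cdot R$ is polynomial in $d$ and $\lambda_1 - \lambda_d$ and independent of $\tfrac{1}{\epsilon}$; (iii) apply Theorem \ref{thm_infinity_divergence_sampler} to obtain a draw $\hat P$ within infinity-distance $\epsilon$ of $\pi$; (iv) if the sampled point lives in a relaxation rather than directly in the set of rank-$k$ projections, apply the rounding map of \cite{leake2020polynomial} (which preserves differential privacy by post-processing and the expectation bound by convexity) to return an actual rank-$k$ projection $P$. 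Infinity-closeness to $\pi$ immediately yields both the $\epsilon$-differential privacy guarantee (via the standard exponential-mechanism analysis, which requires pure bounds on the density ratio) and the utility bound (by a straightforward comparison of expectations under two densities whose log-ratio is at most $\epsilon$).

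The main obstacle is bookkeeping rather than conceptual: one has to make sure that the polytope parameters $m, R, r$ and the Lipschitz constant $L$ extracted from the \cite{leake2020polynomial} formulation all scale as $\poly(d, \lambda_1 - \lambda_d)$ (independently of $\tfrac{1}{\epsilon}$), because only then does the runtime $T = O((m^2 d^3 + m^2 d L^2 R^2)\cdot[LR + d\log(\tfrac{Rd + LRd}{r\epsilon})])$ from Theorem \ref{thm_infinity_divergence_sampler} collapse to $\poly(d, \lambda_1 - \lambda_d) \cdot \polylog(\tfrac{1}{\epsilon})$ arithmetic operations. A secondary point to verify is that the $\epsilon$ infinity-distance sampling error translates into at most an $O(\epsilon)$ multiplicative degradation of the expected utility $\mathbb{E}[\langle \Sigma, P\rangle]$, which can be absorbed into the constant hidden in the assumption $\sum_{i=1}^k \lambda_i \geq C \tfrac{dk}{\epsilon \delta}\log \tfrac{1}{\delta}$; this follows because infinity-closeness implies that expectations of any nonnegative bounded functional agree up to a factor of $e^{\pm \epsilon} = 1 \pm O(\epsilon)$. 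With these parameter bounds in hand, the corollary follows as a direct substitution into the \cite{leake2020polynomial} framework, thereby resolving their open question on logarithmic $\tfrac{1}{\epsilon}$ dependence.
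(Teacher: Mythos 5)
Your proposal is correct and takes essentially the same approach as the paper: invoke the reduction from \cite{leake2020polynomial} to sampling a log-Lipschitz log-concave density on an explicit polytope with infinity-distance error $O(\epsilon)$, then substitute Theorem \ref{thm_infinity_divergence_sampler} for their $\poly(\frac{1}{\epsilon})$ sampler after checking that $m$, $R$, $r$, and $L$ are all $\poly(d,\lambda_1-\lambda_d)$ and independent of $\frac{1}{\epsilon}$. The paper's proof simply reads off the concrete values $m = d(d-1)$, $R = \sqrt{d}$, $r = \frac{1}{8d^2}$, and $L = d^2(\lambda_1-\lambda_d)$ from \cite{leake2020polynomial} (your statement that $L$ carries an extra $\epsilon$ factor from the exponential-mechanism rescaling differs slightly from the $L$ the paper cites, but either way the $\epsilon$-dependence of the runtime remains polylogarithmic, so this is harmless bookkeeping).
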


\section{Proof overviews} \label{sec_technical_overview}

Given any $\epsilon$, and a function $f:\mathbb{R}^d \rightarrow \mathbb{R}$, the goal is to sample from a distribution $\pi(\theta) \propto e^{-f(\theta)}$, constrained to a $d$-dimensional convex body $K$ with infinity-distance error at most $O(\epsilon)$ in a number of arithmetic operations that is logarithmic in $\frac{1}{\epsilon}$.
We assume that $K$ is contained in a ball of some radius $R>0$ and contains a ball of some radius $r>0$, and $f$ is  $L$-Lipschitz.
In addition we would also like our bounds to be polylogarithmic in $\frac{1}{r}$, and  polynomial in $d,L, R$ with a lower-order dependence on the dimension $d$ than currently available bounds for sampling from Lipschitz concave log-densities on a polytope in infinity-distance \cite{bassily2014private}.
We note that since whenever $K$ is contained in a ball of radius $R$ and contains a ball $B(0,r)$ of smaller radius $r$, we also have that $B(0,r) \subseteq K \subseteq B(0,2R)$, without loss of generality, we may assume that  $B(0,r) \subseteq K \subseteq B(0,R)$ as this would only change the bounds provided in our main theorems by a constant factor.

The main ingredient in the proof of  Theorem \ref{thm_infinity_divergence_sampler} is  Theorem \ref{thm_TV_to_inf_divergence} that uses Algorithm \ref{alg_TV_to_pure} to transform a TV-bounded sample into a sample from $\pi$ with error bounded in $\mathrm{d}_\infty$.
{Subsequently, we invoke Theorem \ref{thm_infinity_divergence_sampler}  when $K$ is given as a polytope $K:=\{x \in \mathbb{R}^d: Ax \leq b\}$ and 
 plug in the Dikin Walk Markov chain of \cite{narayanan2017efficient} which generates independent samples from $\pi$ with bounded TV error.}
 We first present an overview of the proof of Theorem \ref{thm_TV_to_inf_divergence}. (The full proof has been omitted to space restrictions and presented in Appendix \ref{sec_TV_proof}.)
 The proof of Theorem \ref{thm_infinity_divergence_sampler} is presented in Section \ref{sec_completing_main_result}.

\subsection{Converting samples with TV bounds to infinity-distance bounds; proof of Theorem \ref{thm_TV_to_inf_divergence}} \label{sec_continuous_to_TV_overview}

{\bf Impossibility of obtaining log-dependence on infinity-distance via  grid walk. }
One approach is to observe that  if $e^{-f}$ has support on a discrete space $S$ with at most $|S|$ points, then any $\nu$ such that $\|\nu - \pi\|_{\mathrm{TV}} \leq \epsilon$ also satisfies $$\mathrm{d}_{\infty}(\nu, \pi) \leq 2 |S| \frac{\max_{z\in S} e^{-f(z)}}{\min_{z\in S} e^{-f(z)}} \times \epsilon$$ for any $\epsilon \leq \min_{z\in S} \pi(z)$.
This suggests forming a grid $G$ over $K$, then using a discrete Markov chain to generate a sample $\theta$ within  $O(\epsilon)$ TV distance of the discrete distribution $\pi_G \propto e^{-f}$ with support on the grid $G$, and then designing an algorithm which takes as input $\theta$ and outputs a point with bounded infinity-distance to the continuous distribution $\pi$.
This approach was used in \cite{hardt2010geometry} in the special case when $\pi$ is uniform on $K$, and then extended by \cite{bassily2014private} to log-Lipschitz log-concave distributions.
In their approach, \cite{bassily2014private} first run a ``grid-walk'' Markov chain on a discrete grid in a cube containing $K$.
They then apply the bound from \cite{applegate1991sampling} which says that the grid walk obtains a sample $Z$ within TV distance $O(\delta)$ from the distribution $\propto e^{-f}$ (restricted to the grid) in time that is polylogarithmic in $\frac{1}{\delta}$ and quadratic in $a^{-1}$, where $a$ is the distance between neighboring grid points. 
Since their grid has size $|S| = \Theta((\frac{R}{a})^d)$, a TV distance of $O(\delta)$ automatically implies an infinity-distance of $O(\delta c |S|)$, where $c$ is the ratio of the maximum to the minimum probability mass satisfying $c\leq e^{LR}$ since $f$ is $L$-Lipschitz on $K \subseteq B(0,R)$. 
Thus,  by using the grid walk to sample within TV-distance $\delta = O\left(\frac{\epsilon}{|S|c}\right)$ from the discrete distribution $\pi_G$,  they obtain a sample $Z$ which also has infinity-distance $O(\epsilon)$ from $\pi_G$.
Finally, to obtain a sample from the distribution $\pi \propto e^{-f}$ on the continuous space $K$, 
they sample a point uniformly from the ``grid cell'' $[Z-a, Z+a]^d$ centered at $Z$.
Since $f$ is $L$-Lipschitz, the ratio $\frac{e^{-f(Z)}}{e^{-f(w)}}$ is bounded by $O(\eps)$ for all $w$ in the grid cell $[Z-a, Z+a]^d$ as long as $a= O\left(\frac{\eps}{L \sqrt{d}}\right)$, implying that the sample is an infinity-distance of $O(\eps)$ from $\pi \propto e^{-f}$.
However, since the running time bound of the grid walk is quadratic in $a^{-1}$, the grid coarseness $a= O\left(\frac{\eps}{L \sqrt{d}}\right)$ needed to achieve $O(\eps)$ infinity-distance from $\pi$ leads to a running time bound which is {\em quadratic} in $\frac{1}{\eps}$.

To get around this problem, rather than relying on the use of a discrete-space Markov chain such as the grid walk to sample within $O(\eps)$ infinity-distance from $\pi$, we introduce an algorithm (Algorithm \ref{alg_TV_to_pure}) which transforms any sample within  $\delta = O\left(\eps e^{-d-LR}\right)$ TV distance from the distribution $\pi \propto e^{-f}$ on the {\em continuous} space $K$ (as opposed to a grid-based discretization of $K$), into a sample within $O(\eps)$ infinity-distance from $\pi$.
This allows us to make use of a continuous-space Markov chain, such as the Dikin walk of \cite{narayanan2017efficient}, whose step-size is not restricted by a grid of coarseness $w= O\left(\frac{\eps}{L \sqrt{d}}\right)$ and instead is independent of $\eps$, in order to generate a sample within $O(\eps)$ infinity-distance from $\pi$ in runtime that is {\em logarithmic} in $\frac{1}{\eps}$.

\smallskip
\noindent
{\bf Converting continuous space TV-bounded samples  to infinity-distance bounded samples. }
As discussed in Section \ref{sec_intro}, there are many Markov chain results which allow one to sample from a log-concave distribution on $K$ with error bounded in weaker metrics such as total variation, Wasserstein, or KL divergence.
However, when sampling from a continuous distribution, bounds in these  metrics do not directly imply bounds in infinity-distance.
And techniques used to prove bounds in weaker metrics do not easily extend to methods for bounding the infinity-distance; see Section \ref{sec_challenges_spectral_methods}.

\smallskip
\noindent
{\em Convolving with continuous noise.} 
As a first attempt, we consider the following simple algorithm: sample a point $\theta \sim \mu$ from a distribution $\mu$ with total variation error $\|\mu - \pi \|_{\mathrm{TV}} \leq O(\epsilon)$.
Since $f$ is $L$-Lipschitz, for any $\Delta< \frac{\epsilon}{L}$ and any ball $B(z,\Delta)$ in the $\Delta$-interior of $K$ (denoted by $\mathrm{int}_\Delta(K)$; see Definition \ref{def:interior}), we can obtain a sample from a distribution $\nu$ such that $\log\left(\frac{\nu(z)}{\mu(z)}\right) \leq \epsilon$ for all $z \in \mathrm{int}_\Delta(K)$ by convolving $\mu$ with the uniform distribution on the ball $B(0,\Delta)$.
Sampling from this distribution $\nu$ can be achieved by first sampling $\theta \sim \mu$ and then adding noise $\xi \sim \mathrm{Unif}(B(0, \Delta))$ to the sample $\theta$.

Unfortunately, this simple algorithm does not allow us to guarantee that $\log(\frac{\nu(z)}{\mu(z)}) \leq \epsilon$ at points $z \notin  \mathrm{int}_\Delta(K)$ which are a distance less than $\Delta$ from the boundary of $K$.
To see why, suppose that $K = [0,1]^d$ is the unit cube, that $f$ is constant on $K$, and consider a point $w = (1,\ldots, 1)$ at the corner of the cube $K$.
In this case we could have that $\nu(z) \leq 2^{-d} \pi(z)$ for all $z$ in some ball containing $w$, and hence $\mathrm{d}_{\infty}(\nu, \pi) = \sup \left| \log(\frac{\nu(z)}{\pi(z)})\right| \geq d \log(2)$, no matter how small we make $\Delta$.

\smallskip
\noindent {\em Stretching the convex body to handle points close to the boundary.}
To get around this problem, we would like to design an algorithm which samples from some distribution $\nu$ such that $\left|\log \frac{\nu(z)}{\pi(z)}\right| \leq \epsilon$ for all $z \in K$, including at points $z$ near the corners of $K$.
Towards this end, we first consider the special case where $K$ is itself contained in the $\Delta$-interior of another convex body $K'$, the function $f: K \rightarrow \mathbb{R}$ extends to an $L$-Lipschitz function on $K'$ (also referred to here with slight abuse of notation as $f$) %
and where we are able to sample from the distribution $\propto e^{-f}$ on $K'$ with $O(\epsilon)$ total variation error.
If we sample $\theta \sim e^{-f}$ on $K'$ with total variation error $O(\delta)$ where $\delta \leq \epsilon e^{-d \log(R)}$, add noise $\xi \sim \mathrm{Unif}(B(0,\Delta))$ to $\theta$ for $\Delta = \frac{\delta}{LR}$, and then reject $\theta + \xi$ only if it is not in $K$, we obtain a sample whose distribution is $O(\epsilon)$ from the distribution $\propto e^{-f}$ on $K$ in infinity-distance.

However, we would still need to define and sample from such a convex body $K'$, and to make sure that $K'$ is not too large when compared to $K$; otherwise the samples from the distribution $\propto e^{-f}$ on $K'$ may be rejected with high probability.
Moreover, another issue we need to deal with is that $f$ may not even be defined outside of $K$.

\begin{figure}[h]
    \centering
    \includegraphics[width=0.6\textwidth]{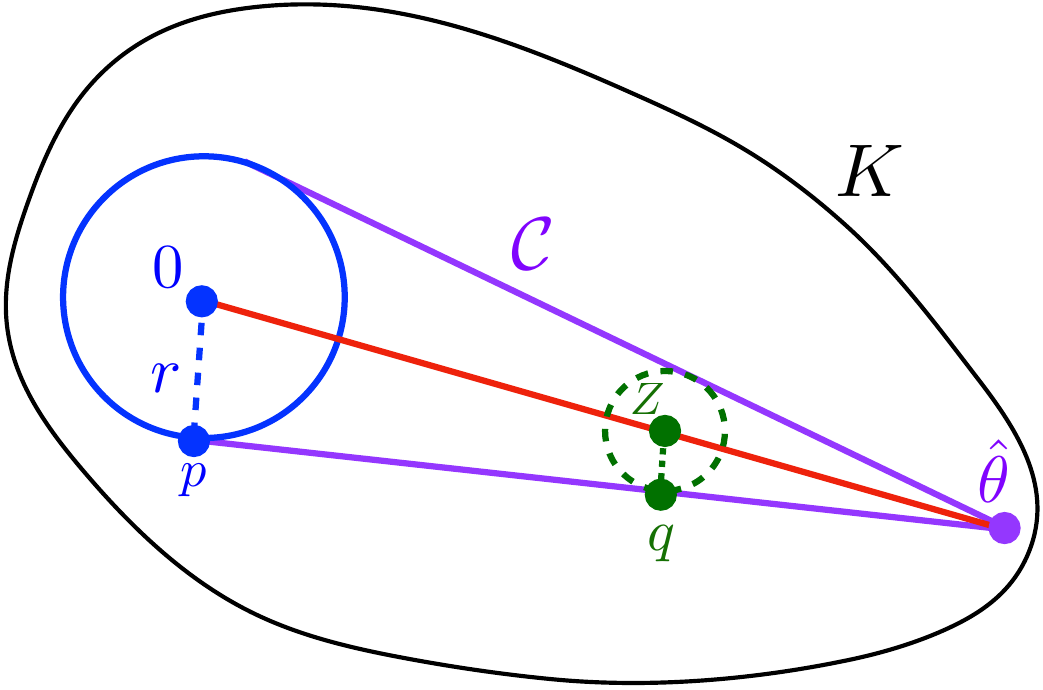}
    \caption{The construction used in the proof of Lemma \ref{Lemma_cvx_hull}.}
    \label{convex_hull_diagram}
\end{figure}

\noindent
To get around these two problems, in Algorithm \ref{alg_TV_to_pure}, we begin by taking as input a point $\theta \sim \mu$ sampled from some distribution $\mu$ supported on $K$ where $\|\mu- \pi\|_{\mathrm{TV}} \leq \delta$ for some $\delta \leq \epsilon e^{-d \log(R)}$, and add noise $\xi \sim \mathrm{unif}(B(0, \Delta r))$ in order to sample from a distribution $\hat{\mu}$ which satisfies $\left|\log \frac{\hat{\mu}(z)}{\pi(z)}\right| \leq \epsilon$ for all $z  \in \mathrm{int}_{\Delta r}(K)$. 
Here $r$ is the radius of the small ball contained in $K$; the choice of radius $\Delta r$ for the noise distribution is because we will show in the following paragraphs that to sample from the distribution $\pi$ on $K$ with infinity-distance error $\eps$ it is sufficient sample a point in the $\Delta r$-interior of $K$ and to then apply a ``stretching'' operation to $K$.

We still need a method of sampling  within $O(\epsilon)$ infinity-distance error of $\pi$ on all of $K$, including in the region $K\backslash \mathrm{int}_{\Delta r}(K)$ near the boundary of $K$.
Towards this end, after Algorithm \ref{alg_TV_to_pure} generates a point $Z = \theta + \xi$ from the above-mentioned distribution $\hat{\mu}$,
it then multiplies $Z$ by $\frac{1}{1-\Delta}$ and returns the resulting point $\hat{\theta}:= \frac{1}{1-\Delta}Z$ if it is $K$, in other words, if $Z \in (1-\Delta)K$.
If we can show that $(1-\Delta) K \subseteq \mathrm{int}_{\Delta r}(K)$, then this would imply that $\left|\log \frac{\hat{\mu}(z)}{\pi(z)}\right| \leq \epsilon$ for all $z  \in (1-\Delta) K$, and hence that the distribution of $\hat{\nu}$ of the returned point $\hat{\theta}$ satisfies $$    \left|\log \frac{\hat{\nu}(z)}{\pi((1-\Delta) z)}\right| \leq \left|\log \frac{\hat{\mu}((1-\Delta)z)}{\pi((1-\Delta)z)}\right| + \log \frac{1}{(1-\Delta)^d}\leq O(\epsilon)$$ for all $z \in K$.
Since $f$ is $L$-Lipschitz we have $\left|\log\frac{\pi(\theta)}{\pi((1-\Delta)\theta)}\right|=O(\epsilon)$ for all $\theta \in K$, and hence we would then have that the distribution $\hat{\nu}$ of the point $\hat{\theta}$ returned by Algorithm \ref{alg_TV_to_pure} satisfies
\begin{equation} \label{eq_TO_1}
    \left|\log \frac{\hat{\nu}(z)}{\pi(z)}\right| = \left|\log \frac{\hat{\nu}(z)}{\pi((1-\Delta) z)}\right| + O(\epsilon) \leq O(\epsilon) \qquad \forall z \in K.
\end{equation}
However, for \eqref{eq_TO_1} to hold, we still need to show that $(1-\Delta)K \subseteq \mathrm{int}_{\Delta r}(K)$  (proved in Lemma \ref{Lemma_cvx_hull}).
In other words, we would like to show that for any point $Z \in (1-\Delta)K$, there is a ball $B(Z, \Delta r)$ centered at $Z$ of radius $\Delta r$ contained in $K$.
To show this fact, it is sufficient to consider the convex hull $\mathcal{C}$ of $B(0,r) \cup \left\{\frac{1}{1-\Delta} Z \right\} \subseteq K$, and show that it contains the ball $B(Z, \Delta r)$.
Towards this end, we make the following geometric construction: we let $p$ be a point such that the line $p\hat{\theta}$ is tangent to $B(0,r)$, and $q$ the point on $p\hat{\theta}$ which minimizes the distance $\|q-Z\|_2$ (see Figure \ref{convex_hull_diagram}).
Since $\angle 0 p \hat{\theta}$ and $\angle Z q \hat{\theta}$ are both right angles, we have that the triangles $0p\hat{\theta}$ and $Z q \hat{\theta}$ are similar triangles, and hence that $\frac{\|Z - q\|_2}{r} = \frac{\|Z - \hat{\theta}\|_2}{\|\hat{\theta}-0\|_2}$.  
In other words,
\begin{eqnarray*} \|Z - q\|_2
 = \frac{\|Z - \hat{\theta}\|_2}{\|\hat{\theta}-0\|_2} \times r
 = \frac{\left\|Z - \frac{1}{1-\Delta} Z\right\|_2}{\left\|\frac{1}{1-\Delta} Z\right\|_2} \times r
 =  \Delta \times r,\\
\end{eqnarray*}
implying a ball of radius $\Delta r$ centered at $Z$  is contained in $\mathcal{C}\subseteq K$, and hence that $$(1-\Delta)K \subseteq \mathrm{int}_{\Delta r}(K).$$

\smallskip
\noindent
{\em Bounding the infinity distance error.}
 To complete the bound on the infinity-distance  of the distribution $\hat{\nu}$ of the point returned by Algorithm \ref{alg_TV_to_pure} to the target distribution $\pi$, we must show both a lower bound (Lemma \ref{lemma_lower_bound}) and an upper bound (Lemma \ref{lemma_upper_bound}) on the ratio $\frac{\hat{\nu}(\theta)}{\pi(\theta)}$ at every point $\theta \in K$.
Both the upper and lower bounds are necessary to bound the infinity-distance $\mathrm{d}_{\infty}(\hat{\nu}(\theta), \pi(\theta)) = \sup_{\theta \in K} \left|\log  \frac{\hat{\nu}(\theta)}{\pi(\theta)}\right|$.

Both Lemmas \ref{lemma_lower_bound} and \ref{lemma_upper_bound} require the input point to have TV error $\delta < \eps (\frac{R}{\Delta r})^{-d} e^{-LR}$.    
The term $(\frac{R}{\Delta r})^{-d}$ is a lower bound on the ratio of the volume of $K$ to the volume of the smoothing ball $B(0,\Delta r)$; this bound holds since $K$ is contained in a ball of radius $R$.
The term $e^{-LR}$ is a lower bound on the ratio $\frac{\min_{w \in K} \pi(w)}{\max_{w \in K} \pi(w)}$ of the minimum value of the density  $\pi$ to the maximum value of $\pi$ at any two points in $K$; this bound holds since $f$ is $L$-Lipschitz.

The above choice of $\delta$ ensures that in any ball $B(z,\Delta r)$ with center $z$ in the $\Delta r$-interior of $K$, the distribution $\mu$ of the input point, which satisfies $\|\mu- \pi\|_{\mathrm{TV}} \leq \delta$, will have between $e^{-\eps}$ and $e^{\eps}$ times the probability mass which the target distribution $\pi$ has inside the ball $B(z,\Delta r)$.
Thus, when the distribution $\mu$ is smoothed by adding noise uniformly distributed on a ball of radius $\Delta r$, the smoothed distribution $\tilde{\nu}(\theta)$ is within $e^{-\eps}$ and $e^{\eps}$ times the target probability density $\pi(\theta)$ at any point $\theta$ in the $\Delta r$-interior of $K$, allowing us to bound the infinity distance error of the smoothed distribution $\tilde{\nu}$ at any point $\theta$ in the $\Delta r$-interior of $K$.
We then apply this fact, together with Lemma \ref{Lemma_cvx_hull} which says that for any point $\theta \in K$ the point $(1- \Delta)\theta$ is in the $\Delta r$-interior of $K$, to bound the distribution $\nu$ of the output point (after the stretching operation) as follows,
\begin{equation}\label{eq_overview1}
\nu(\theta) \geq (1- \Delta)^d \tilde{\nu}((1- \Delta)\theta) 
\stackrel{\textrm{Lemma } \ref{Lemma_cvx_hull}}{\geq} (1- \Delta)^d \pi((1- \Delta)\theta) \times e^{-\epsilon}
\geq \pi(\theta) e^{-\frac{\epsilon}{2}}.
\end{equation}
 Here our choice of hyperparameter $\Delta \leq \frac{\eps}{\max(d, LR)}$ ensures that $(1- \Delta)^d = \Omega(1)$ and, since $f$ is $L$-Lipschitz, that $\pi((1- \Delta)\theta) \geq e^{-\epsilon} \pi(\theta)$.
This proves the lower bound (Lemma \ref{lemma_lower_bound}).
The proof of the upper bound (Lemma \ref{lemma_upper_bound}) follows in a similar way as equation \eqref{eq_overview1} but with the inequalities going in the opposite direction.

\smallskip
\noindent
{\em Bounding the number of iterations and concluding the proof of Theorem \ref{thm_TV_to_inf_divergence}.}
We still need to deal with the problem that the point $\hat{\theta}$ may not be accepted.
If this occurs, roughly speaking, we repeat the above procedure until a point $\hat{\theta}$ with distribution $\hat{\nu}$ is accepted.
To bound the number of iterations, we show that $\hat{\theta}$ is in $K$ with high probability.
Towards this end, we first use the facts that $f$ is $L$-Lipschitz and $K \subseteq B(0,R)$, to show that the probability a point sampled from $\pi \propto e^{-f}$ lies inside $(1-\Delta) K$ is at least $(1- \Delta)^d e^{- L \Delta R} \geq \frac{9}{10}$ (Lemma \ref{lemma_rejection_probability}).
Lemma \ref{lemma_rejection_probability} says that if you stretch the polytope by a factor of $\frac{1}{1-\Delta}$, then most of the volume of the stretched polytope $(\frac{1}{1-\Delta})K$ remains inside the original polytope $K$.
The term $(1- \Delta)^d$ is just the ratio of the volume of  $(1-\Delta) K$ to the volume of $K$.
And, since $f$ is $L$-Lipschitz, the term $e^{- L \Delta R}$ bounds the ratio $\frac{\pi(\theta)}{\pi(\frac{1}{1-\Delta}\theta)}$ of the target density at any point $\theta \in K$ to the value of $\pi$ at the point $\frac{1}{1-\Delta}\theta$ to which the stretching operation transports $\theta$, whenever $\frac{1}{1-\Delta}\theta \in K$.
The choice of hyperparameter $\Delta \leq \frac{\eps}{\max(d, LR)}$ ensures that the acceptance probability $(1- \Delta)^d e^{- L \Delta R}$ guaranteed by Lemma \ref{lemma_rejection_probability} is at least  $\frac{9}{10}$.
Since the convex body $(1-\Delta)K$ contains the ball $B(0,\frac{r}{2})$, applying Lemma \ref{Lemma_cvx_hull} a second time (this time to the convex body $(1-\Delta)K$) 
we get that  $$(1-3\Delta) K \subseteq \mathrm{int}_{\Delta r} ((1-\Delta)K).$$
Thus, by Lemma \ref{lemma_rejection_probability} we have that  
 $\theta$ lies inside $\mathrm{int}_{\Delta r} ((1-\Delta)K)$ with probability at least $\frac{9}{10} - \delta \geq \frac{8}{10}$ (as $\theta$ is sampled from $\pi$ with TV error $\leq \delta$).
Therefore, since $\xi \sim B(0,\Delta r)$, we must also have that the probability that the point $\hat{\theta} = \frac{1}{1-\Delta}(\theta + \xi)$ is in $K$ (and is therefore not rejected) is greater than $\frac{8}{10}$ \footnote{In Algorithm \ref{alg_TV_to_pure} we reject $\hat{\theta}$ with a slightly higher probability to ensure that, in differential privacy applications, in addition to the privacy of the point returned by the algorithm, the runtime is also $\epsilon$-differentially private.}.
This implies that the number of iterations until our algorithm returns a point $\hat{\theta}$ is less than $k>0$ with probability at least $1-2^{-k}$, and the expected number of iterations is at most 2 (proved in Corollary \ref{cor_runtime}).

Since each iteration requires one random sample $\theta$ from the distribution $\mu$, and one call to a membership oracle for $K$ (to determine if $\frac{1}{1-\Delta}Z \in K$), the number of sampling oracle and membership oracle calls required by Algorithm \ref{alg_TV_to_pure} is $O(1)$ with very high probability.
Therefore, with high probability, Algorithm \ref{alg_TV_to_pure} returns a point $\hat{\theta}$ from a distribution with infinity-distance at most $\epsilon$ from $\pi$ after $O(1)$ calls to the sampling and membership oracles.

Since Algorithm \ref{alg_TV_to_pure} succeeds with probability $1- 2^{-k}$ after $k$ iterations, after $$\tau_{\mathrm{max}} = 5d \log(\frac{R}{r}) + 5 LR + \eps$$ iterations Algorithm \ref{alg_TV_to_pure} will have succeeded with probability roughly $1-\eps(\frac{R}{r})^{-5d} e^{-5LR}$.
In the very unlikely event that Algorithm \ref{alg_TV_to_pure} still has not succeeded after $\tau_{\mathrm{max}}$ iterations, Algorithm \ref{alg_TV_to_pure}  simply outputs a point sampled from the uniform distribution on the ball $B(0,r)$ of radius $r$ contained in $K$.
The probability mass of the target distribution $\pi$ inside this ball is at least as large as $(\frac{R}{r})^{-d} e^{-LR}$; thus, since $f$ is $L$-Lipschitz, we show in Corollary \ref{cor_runtime} that outputing a sample from the uniform distribution on this ball with probability $\eps(\frac{R}{r})^{-5d} e^{-5LR}$ does not change the $\infty$-distance error of the sample returned by the algorithm by more than $\eps$.

\subsection{Completing the proof of Theorem \ref{thm_infinity_divergence_sampler}} \label{sec_completing_main_result}

\begin{proof}
By Theorem \ref{alg_TV_to_pure}, the output of Algorithm \ref{alg_TV_to_pure} has infinity-distance error bounded by $\epsilon$ as long as the input samples have TV distance error bounded by $$\delta \leq O\left(\epsilon\times \left(\frac{R(d \log(\nicefrac{R}{r})+LR)^2}{\eps r}\right)^{-d} e^{-LR}\right),$$ and, with high probability, Algorithm \ref{alg_TV_to_pure} requires $O(1)$ such independent samples.
To generate a sample from $\pi$ with TV error $O(\delta)$ when $K = \{\theta \in \mathbb{R}^d : A \theta \leq b\}$ is a polytope defined by $m$ inequalities, we use the Dikin Walk Markov chain of \cite{narayanan2017efficient}.
This Markov chain requires an initial point from some distribution $\mu_0$ which is $w$-warm with respect to the stationary distribution $\pi$, that is, $\sup_{z\in K} \frac{\mu_0(z)}{\pi(z)} \leq w$.
To obtain a warm start, we let $\mu_0$ be the uniform distribution on the ball with radius $r$ contained in $K$ and sample from $\mu_0$.
Since $f$ is $L$-Lipschitz, and $K$ is contained in a ball of radius $R$,  $\mu_0$  is $w$-warm with $$ w \leq \frac{1}{\mathrm{Vol}(B(0,r))} \times  \left(  \frac{\max_{z \in K} \pi(\theta)}{\min_{z \in K} \pi(\theta)} \times \mathrm{Vol}(B(0, R)) \right ) \leq \left(\frac{R}{r}\right)^d \times e^{RL}.$$
\noindent
From \cite{narayanan2017efficient}, we have that from this $w$-warm start the Dikin Walk Markov chain requires at most $O((m^2d^{4} + m^2d^{2}L^2R^2)\log(\frac{w}{\delta}))$ steps to generate a sample with TV distance at most $\delta$ from $\pi$, where each step makes one function evaluation and $O(md^{\omega-1})$ arithmetic operations.
Plugging in the above values of $\delta, w$ the number of Markov chain steps is $$T=O((m^2d^3 + m^2 d L^2 R^2) \times [LR + d\log(\frac{Rd +LRd}{r \eps})])$$ to generate each independent sample with the required TV error $O(\delta)$.
Since the number of independent samples required as input for Algorithm \ref{alg_TV_to_pure} is $O(1)$ w.h.p., the number of arithmetic operations for Algorithm \ref{alg_TV_to_pure} to output a point with at most $\epsilon$ infinity-distance error is $O(T \times md^{\omega-1})$.

Finally, we note that in the more general setting where $K$ is a convex body with membership oracle (but not necessarily) a polytope, we can instead use, for instance, the hit-and-run Markov chain of \cite{lovasz2006fast} to generate samples from $\pi$ with TV error $O(\delta)$ in a number of membership and function evaluation oracle calls that is polynomial in  $d$ and poly-logarithmic in $\frac{1}{\delta}, R,r$.
We can then plug this sample into our Algorithm \ref{alg_TV_to_pure} to obtain a sample from $\pi$ with infinity-distance error $O(\eps)$ in a number of oracle calls that is (poly)-logarithmic in $\frac{1}{\epsilon}, \frac{1}{r}$ and polynomial on $d, L, R$. (see Remark \ref{rem_membership}). 
\end{proof}

\section{Challenges obtaining infinity-distance bounds from continuous-space Markov chains} \label{sec_challenges}

\subsection{Challenges in obtaining infinity-distance bounds via spectral gap methods} \label{sec_challenges_spectral_methods}
 Many TV bounds for Markov chains have been obtained by applying isoperimetric inequalities for log-concave distributions $\pi$, to bound the spectral gap $\gamma := 1- \lambda_2(\mathcal{K})$ of the Markov transition kernel operator $\mathcal{K}$ (see e.g. \cite{lovasz1993random, lovasz2003hit}).
If the initial distribution $\mu_0$ is such that $\sup_{\theta \in K} \frac{\mu_0(\theta)}{\pi(\theta)}$ is bounded by some number $w$ (e.g., by initializing the Markov chain at a uniform random point in a ball contained in the interior of the polytope), a bound on the spectral gap of $\mathcal{K}$ implies $O(\delta)$ bounds on the total variation error in a number of steps that is logarithmic in $\frac{w}{\delta}$. 

Unfortunately, bounding the spectral gap does not in general allow one to obtain bounds on the infinity-distance.
While one can bound the $\chi^2$-divergence (which implies a bound on the TV distance),
by using the fact (first shown in \cite{lovasz1993random}) that $\|\mathcal{K}^t u_0\|_2 \leq  (1- \gamma)^t \|u_0\|_2$,  where $u_0 := \mu_0 - \mathrm{proj}_\pi(\mu_0)$, a bound on the spectral gap does not imply a bound in the infinity-distance error.
The difficulty in bounding the infinity-distance error arises because, if the space $S$ is continuous, even though $\|\mathcal{K}^t u_0\|_2 \leq  (1- \gamma)^t \|u_0\|_2$,
 there may still be some $c>0$ for which $\|\mathcal{K}^t u_0\|_\infty \geq c$  for all $t>0$.
For instance this is the case when $\pi$ is the uniform distribution on a polytope $K$ and $\mathcal{K}$ is the transition Kernel of the Dikin walk,
since for every time $t$, there is always a ball $B_t \subseteq K$ sufficiently close to the boundary of $K$ such that the Dikin walk has probability zero of entering $B_t$ after $t$ steps (for the Gaussian Dikin walk, the probability of entering $B_t$ is very low, but still nonzero). 
Thus, $\mathcal{K}^t \mu_0(z) = 0$ for all points $z \in B_t$, and, since  $\pi(z) = \frac{1}{\mathrm{vol}(K)}$ at every point $z \in K$, we must have  $\|\mathcal{K}^t\mu_0 - \pi\|_\infty =\|\mathcal{K}^t u_0\|_\infty \geq \frac{1}{\mathrm{vol}(K)}$ for all $t \geq 0$.

On the other hand, in the special case when the space $S$ is discrete and has a finite number of elements $|S|$, the ($\ell^2$-normalized) eigenvectors $v$ have bounded infinity norm, $\|v\|_\infty \leq 1$, and, hence $\|\mathcal{K}^t \mu_0 - \mathrm{proj}_\pi(\mu_0) \|_\infty \leq (1-\gamma)^t |S|$. 
Thus, if $S$ is discrete with finitely many elements, bounding the spectral gap implies one can sample from $\pi$ with infinity-distance error $O(\epsilon)$ in a number of steps that is logarithmic in $\frac{1}{\epsilon} \times \frac{|S|}{\min_{i\in[S]} \pi[i]}$ if $\min_{i\in[S]} \pi[i]>0$  (where, with slight abuse of notation, we denote by $\pi$ the probability mass function of the discrete Markov chain's stationary distribution).

Aside from bounding the spectral gap, many works instead make use of probabilistic coupling methods to bound the distance of a continuous-space Markov chain to the target distribution in, e.g., the Wasserstein distance metric (see for instance \cite{durmus2017nonasymptotic, dalalyan2017theoretical}). 
And other works instead achieve bounds in the KL divergence metric for, e.g., the Langevin dynamics Markov chain by analyzing the (continuous-time) Langevin diffusion as a gradient flow of the KL divergence functional in the space of probability distributions 
(see for instance \cite{wibisono2018sampling, cheng2018convergence, durmus2019analysis}). 
In Appendix \ref{sec_challenges_coupling} and \ref{sec_challenges_KL} we discuss challenges which arise if one seeks to extend either of these methods to obtain $O(\epsilon)$ bounds in the infinity-distance metric, in a number of Markov chain steps that is polylogarithmic in $\frac{1}{\epsilon}$.

\subsection{Challenges in extending coupling-based analysis from Wasserstein (and TV) bounds to infinity-distance bounds}\label{sec_challenges_coupling}
As an alternative to bounding the spectral gap of a Markov chain, one can oftentimes instead make use of a probabilistic coupling method to bound the error of the Markov chain.
Here, one considers two Markov chains, one Markov chain started at an arbitrary initial point which is the initial point provided to the Markov chain sampling algorithm, and another ``imaginary'' Markov chain (oftentimes just a copy of the algorithm's Markov chain) which is started at a random point distributed according to the target distribution $\pi$ and for which $\pi$ is a stationary distribution.
The goal is to find a joint distribution-- also called a probabilistic coupling-- for the steps of the two Markov chains such that the distance between the two chains becomes very small in some metric of interest after the Markov chains take multiple steps.
If one can find such a coupling, then one can bound the distance of the algorithm's Markov chain to the target distribution $\pi$ in the relevant metric.
If a coupling is found such that the distance between the two Markov chains contracts in the Euclidean distance, then this implies bounds in the Wasserstein metrics.
A contraction of the expected (squared) Euclidean distance implies bounds in the $1$- (or $2$)-Wasserstein metric (see e.g. \cite{durmus2017nonasymptotic, dalalyan2017theoretical}).
However, for any $k \in \mathbb{N}$, a bound on the $k$-Wasserstein distance $W_k(\nu,\pi) := \sup_{\rho\sim \Pi(\mu, \pi)} (\mathbb{E}_{(X,Y) \sim \rho}[\| X-Y \|^k ])^{\frac{1}{k}} < \epsilon$,  where $\Pi(\mu, \pi)$ denotes the set of all possible couplings of the distributions $\mu$ and $\pi$, does not imply a bound on the infinity-distance $\sup_{\theta \in S} \left|\log \frac{\nu(\theta)}{\pi(\theta)}\right|$. 
For instance, consider $\pi \propto \mathrm{unif}[0,1]$, and $\nu$ the uniform distribution on the grid $\{1,2,\ldots, n\}$ for $n< \frac{1}{4\epsilon}$.
And consider the coupling $\Pi(\pi,\nu)$ of $\pi$ and $\nu$, which, for every $i \in\{1,2,\ldots, n\}$, transports all the probability  mass of $\pi$ in the interval $(\frac{i}{n}, \frac{i+1}{n}]$ to a point mass at $\frac{i+1}{n}$.
This coupling does not transport any of the probability mass a distance of more that $\frac{1}{2} \epsilon$, and hence $W_k(\pi, \nu) < \epsilon$ for any $k \in \mathbb{N} \cup \{\infty\}$.
On the other hand, since $\nu$ has atomic point-masses while $\pi$ is a continuous distribution, we have $\mathrm{d}_{\infty}(\pi, \mu)= \sup_{\theta \in S} \left|\log \frac{\nu(\theta)}{\pi(\theta)}\right| = \infty$.

In some cases, a contraction can be shown to occur with probability $1$, yielding a sample with bounds in the $\infty$-Wasserstein metric.
For instance, this is the case for ``idealized'' versions of the Hamiltonian Monte Carlo Markov chain whose steps are determined by continuous trajectories determined by the Hamiltonian mechanics \cite{mangoubi2017rapidp1,vishnoi_HMC}.
For this idealized version of the Hamiltonian Monte Carlo Markov chain, in the special case where the target log-density is strongly convex and smooth on all of $\mathbb{R}^d$, one can oftentimes show that the two chains contract to within a Euclidean distance of $O(\epsilon)$, and generate a sample $\hat{\theta}$ from $\pi$ with an error of $O(\epsilon)$ in the $\infty$-Wasserstein metric, after a number of Markov chain steps that is logarithmic in $\frac{1}{\epsilon}$.
From such $\hat{\theta}$, if $f$ is $1$-Lipschitz on $\mathbb{R}^d$, one can obtain a sample from $\pi$ with $O(\epsilon d)$ infinity-distance error by adding a uniform random vector on a ball of radius roughly $\epsilon d.$
One can extend this approach to the problem of sampling from, e.g., smooth and Lipschitz convex log-densities supported on a polytope, by extending $f$ to a $L$-Lipschitz log-density on all of $\mathbb{R}^d$, and adding a strongly convex regularizer.
However, to implement such a Markov chain as an algorithm, the trajectories must be discretized, and the number of discretization steps to bring the two Markov chains within $O(\epsilon)$ Euclidean distance is polynomial in $\frac{1}{\epsilon}$ (a polynomial dependence on $\epsilon$ also occurs when one only seeks an $O(\epsilon)$ bound on the $1$- or $2$- Wasserstein error for many Markov chain algorithms via contractive couplings, including ``Unadjusted'' Langevin dynamics Markov chains \cite{durmus2017nonasymptotic, dalalyan2017theoretical}).
Thus, we need a different approach if we wish to achieve $O(\epsilon)$ infinity-divergence bounds in runtime logarithmic in $\frac{1}{\epsilon}$.

Another approach would be to design a coupling such that the two Markov chains contract within some distance $O(\frac{1}{d})$, and then to propose to add a uniform random vector on a ball of radius roughly $\Theta(1)$ to each Markov chain and accept this proposed step according to the Metropolis acceptance rule for $\pi$.
If $f$ is $1$-Lipschitz on $\mathbb{R}^d$ (or on the constraint set $K$), the acceptance probability will be at least $\frac{1}{2}$ and one can show that the total variation distance of two Markov chains is  at least $\frac{1}{2}$ after some number $T$ steps where $T$ is polynomial in $\frac{1}{\epsilon}$.
Repeating this coupling every $T$ steps, one can show that the total variation distance of the two chains decreases by a factor of at least $\frac{1}{2}$ every $T$ steps, allowing one to generate a sample from $\pi$ with total variation error $O(\epsilon)$ in a number of steps that is logarithmic in $\frac{1}{\epsilon}$.
In other words, there exists a coupling of the two chains such that after a number of steps that is logarithmic in $\frac{1}{\epsilon}$, the algorithm's Markov chain is equal to the chain with distribution $\pi$ with probability $1-O(\epsilon)$.
However, we may still have that, with probability roughly $\epsilon$, the algorithm's Markov chain is concentrated in a region of space of volume $O(\epsilon)$ where the total probability mass of $\pi$ is much smaller than $\epsilon$, which would mean that the infinity-distance to $\pi$ would be $\Omega(1)$ even though the total variation distance to $\pi$ is $O(\epsilon)$.

\subsection{Challenges in extending  methods based on gradient flows in space of distributions from KL bounds to infinity-distance bounds}\label{sec_challenges_KL}
In addition to Markov chain bounds achieved via probabilistic coupling methods, bounds for certain Markov chains can also be achieved by analyzing, e.g., the Langevin diffusion process as a gradient flow of the KL divergence functional (also called the relative entropy functional) in the space of probability distributions under the $2$-Wasserstein metric.
Using this approach, one can show that the Langevin diffusion process with stationary distribution $\pi$ converges to within KL divergence distance $\epsilon$ of $\pi$ in (continuous) time $t$ that is logarithmic in $\frac{1}{\epsilon}$, if, for instance, $\pi$ is strongly log-concave and smooth (see e.g. \cite{villani2009optimal, jordan1998variational}).
One can then discretize the Langevin diffusion process using a discrete-time Markov chain algorithm (such as the Langevin dynamics Markov chain), and bound the distance between the distribution of the algorithm's Markov chain and the diffusion process in the KL or Renyi divergence metrics (see for instance \cite{wibisono2018sampling, cheng2018convergence, durmus2019analysis} for Langevin dynamics Markov chains).
However, these bounds are polynomial in $\frac{1}{\epsilon}$ rather than logarithmic in $\frac{1}{\epsilon}$. 
The polynomial dependence on $\epsilon$ is due to the fact that the discretization step size for the Langevin dynamics algorithms to approximate the Langevin diffusion process  with error $O(\epsilon)$ is polynomial in $\frac{1}{\epsilon}$.

 For instance, if one wishes to sample from $\pi \sim e^{-f(\theta)}$ with support on the interval $[-2,2]$, one can use the Langevin dynamics Markov chain, which is a (first-order) ``Euler'' discretization of the Langevin diffusion on $\mathbb{R}$ with updates $\hat{\theta}_{i+1}$ at each step $i+1$ given as follows:  $\hat{\theta}_{i+1} = \hat{\theta}_i - \eta \nabla f(\hat{\theta}_i) + \sqrt{2\eta} \xi$, where $\xi \sim N(0, I_d)$.
To (approximately) sample from $\pi \sim e^{-f(\theta)}$ with support on $[-2,2]$, one could then output only those steps of the Markov chain which fall inside the constraint interval $[-2,2]$.
% %
 If $f:\mathbb{R}^1 \rightarrow \mathbb{R}$,  $f(\theta) = \frac{1}{2}\theta^2$, then the Langevin diffusion is $\mathrm{d}\theta = -\theta \mathrm{d}t +\sqrt{2}\mathrm{d}W_t$, and the discretization is  $\hat{\theta}_{i+1} = \hat{\theta}_i - \eta \hat{\theta}_i + 2\sqrt{\eta} \xi$. 
 The solution to the (continuous-time) Langevin diffusion at time $t$ is a Gaussian random variable $\theta_t \sim \mathcal{N}(\theta_0 e^{-t},  1-e^{-2t})$, and its stationary distribution is the target distribution $\propto e^{-\frac{1}{2}\theta^2}$. 
On the other hand, the stationary distribution of the discrete-time Markov chain with step size parameter $\eta$ is $N(0,\frac{1}{1-\frac{1}{2} \eta})$.
Therefore, to have the Langevin Markov chain approximate the Langevin diffusion to within infinity-distance error $O(\epsilon)$ (even just in a compact constraint interval such as $K= [-2,2]$), we would need to have a step size $\eta = \mathrm{poly}(\epsilon)$, and the number of Markov chain steps required to sample within infinity-distance error $O(\epsilon)$ would be {\em polynomial} in $\frac{1}{\epsilon}$.

\section{Proofs of Theorem  \ref{thm_TV_to_inf_divergence} and Theorem  \ref{thm_infinity_divergence_sampler}}\label{sec_TV_proof}

In this section, we first prove our main technical result, Theorem \ref{thm_TV_to_inf_divergence} (Appendix \ref{sec_completing_proof_of_TV_to_inf}).
The Lemmas we use to prove Theorem  \ref{thm_TV_to_inf_divergence} are proved in Appendix \ref{sec_stretching}, \ref{sec_acceptance_probability}, and \ref{sec_intinity_distance_bounding}.
 Finally, we plug in TV bounds for the Dikin Walk Markov chain  \cite{narayanan2017efficient} to Theorem  \ref{thm_TV_to_inf_divergence} to complete the proof of Theorem \ref{thm_infinity_divergence_sampler} (Appendix \ref{sec_proof_thm_infinity_divergence_sampler}).

In the following, we define the random variable $\tau$ to be the number of iterations of the ``for'' loop started by Algorithm \ref{alg_TV_to_pure} if Algorithm \ref{alg_TV_to_pure} halts while it is running the for loop.
Otherwise, we set $\tau = \tau_{\mathrm{max}}+1$.

\paragraph{Setting the parameters.}
In the following, we assume that 
\begin{enumerate}
\item $\tau_{\mathrm{max}} \geq 5d \log(\frac{R}{r}) + 5 LR + \eps$,
\item  $\Delta \leq \frac{\epsilon}{512\tau_{\mathrm{max}} \max(d, L R)}$, \item  and  $\delta \leq \frac{1}{64}\epsilon\times (\frac{R}{\Delta r})^{-d} e^{-LR}$.
\end{enumerate}

\subsection{Bounding the number of iterations and completing the proof of Theorem \ref{thm_TV_to_inf_divergence} }\label{sec_completing_proof_of_TV_to_inf}

\begin{proof}\textbf{[of Theorem \ref{thm_TV_to_inf_divergence}]}

\paragraph{Correctness.}
By Lemmas \ref{lemma_lower_bound} and \ref{lemma_upper_bound}, we have that 
\begin{enumerate}
    \item 
    The distribution $\nu$ of the output $\hat{\theta}$ of Algorithm \ref{alg_TV_to_pure} satisfies  $\mathrm{d}_{\infty}(\nu, \pi) \leq \epsilon$.
    \item Moreover, the distribution $\hat{\nu}$ of $\hat{\theta}$ conditional on $\tau \leq t$ satisfies $\mathrm{d}_{\infty}(\hat{\nu}, \pi) \leq \epsilon$ for any $t< \tau_{\mathrm{max}}$.

\end{enumerate}

\paragraph{Bounding the number of operations.}
Each iteration of Algorithm \ref{alg_TV_to_pure} requires one call to the sampling oracle (Line \ref{Line_sampling_oracle}) for $\mu$, and one call to the membership oracle for $K$ (Line \ref{Line_membership_oracle}). 
In addition, each line of  Algorithm \ref{alg_TV_to_pure} requires no more than $O(d)$ arithmetic operations.
Thus, each iteration of Algorithm \ref{alg_TV_to_pure} can be computed in one call to the sampling oracle for $\mu$, one call to the membership oracle for $K$, plus $O(d)$ arithmetic operations.
The number of iterations $\tau$ is random, and can be bounded as follows: 

By Corollary \ref{cor_runtime} we have that $\mathbb{E}[\tau] \leq 3$ and $\mathbb{P}(\tau \geq t) \leq \left(\frac{2}{3}\right)^t$.
By Corollary \ref{cor_runtime} we also have that
\begin{equation*}\left(\frac{1}{2}\right)^t \leq \mathbb{P}(\tau \geq t) \leq \left(\frac{1}{2} + \frac{\epsilon}{8\tau_{\mathrm{max}}}\right)^t \qquad \forall t\leq \tau_{\mathrm{max}},
\end{equation*}
and, hence, that

\begin{eqnarray*}
\left(\frac{1}{2}\right)^t &\leq & \mathbb{P}(\tau \geq t) \leq \left(\frac{1}{2}\right)^t \times \left(1+\frac{\epsilon}{4\tau_{\mathrm{max}}}\right)^t  \quad \quad \forall t\leq \tau_{\mathrm{max}}\\
&\leq &\left(\frac{1}{2}\right)^t \times e^{\frac{\epsilon}{4}} \quad \quad \forall t\leq \tau_{\mathrm{max}}.
\end{eqnarray*}
Thus, we have that, for all $t\leq \tau_{\mathrm{max}}$,
\begin{equation} \label{eq_a7}
\mathbb{P}(\tau = t) = \mathbb{P}(\tau \geq t) - \mathbb{P}(\tau \geq t+1) \leq  \left(\frac{1}{2}\right)^t e^{\frac{\epsilon}{4}} - \left(\frac{1}{2}\right)^{t+1}
\end{equation}
and that
\begin{equation} \label{eq_a8}
\mathbb{P}(\tau = t) = \mathbb{P}(\tau \geq t) - \mathbb{P}(\tau \geq t+1) \geq  \left(\frac{1}{2}\right)^t - \left(\frac{1}{2}\right)^{t+1} e^{\frac{\epsilon}{4}}.
\end{equation}
Thus, we have that
\begin{equation*} %\label{eq_a6}
\left(\frac{1}{2}\right)^t \left(1- \frac{1}{2}  e^{\frac{\epsilon}{4}}\right) \stackrel{\textrm{Eq. }\eqref{eq_a8}}{\leq}  \mathbb{P}(\tau = t)  \stackrel{\textrm{Eq. }\eqref{eq_a7}}{\leq} \left(\frac{1}{2}\right)^t \left( e^{\frac{\epsilon}{4}}-\frac{1}{2} \right)\qquad \forall t\leq \tau_{\mathrm{max}},\\
\end{equation*}
and, hence, that
\begin{equation} \label{eq_a6}
\left(\frac{1}{2}\right)^t e^{-\frac{\eps}{2}} \leq  \mathbb{P}(\tau = t)  \leq \left(\frac{1}{2}\right)^t e^{\frac{\eps}{2}}\qquad \forall t\leq \tau_{\mathrm{max}},\\
\end{equation}
since $\epsilon \leq 1$.

\end{proof}

\subsection{Stretching the polytope to avoid samples near the boundary} \label{sec_stretching}

\begin{definition}[Interior]\label{def:interior}
For any $\Delta \geq 0$ and any $S \subseteq \mathbb{R}^d$, we define the $\Delta$-interior  of $S$, $\mathrm{int}_{\Delta}(S)$, as
\begin{equation*}
\mathrm{int}_{\Delta}(S) = \{z \in S : B(z,\Delta) \in S\}.
\end{equation*}
\end{definition}

\begin{lemma} \label{Lemma_cvx_hull}
Let $Z \in \mathbb{R}^d$ and $0\leq \Delta \leq \frac{1}{2}$.  Then if $\frac{1}{1-\Delta} Z \in K$, we also have that $Z \in \mathrm{int}_{\Delta r}(K)$.
\end{lemma}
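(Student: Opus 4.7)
The plan is to show directly that the ball $B(Z, \Delta r)$ is contained in $K$, which by Definition \ref{def:interior} is exactly what it means for $Z$ to lie in $\mathrm{int}_{\Delta r}(K)$. Throughout I will use the two facts available from the setup of Algorithm \ref{alg_TV_to_pure}: (i) $B(0,r) \subseteq K$, and (ii) $K$ is convex. Let $\hat{\theta} := \frac{1}{1-\Delta} Z$, so that by hypothesis $\hat{\theta} \in K$, and observe the identity $Z = (1-\Delta)\hat{\theta}$.

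The key observation is that any point in $B(Z, \Delta r)$ can be written as a convex combination of a point in $B(0,r)$ and the point $\hat{\theta}$, with $\hat{\theta}$ receiving weight $1-\Delta$. Concretely, for any $v \in B(0,r)$, I would write
\[
Z + \Delta v \;=\; (1-\Delta)\hat{\theta} + \Delta v,
\]
which is a convex combination (coefficients $\Delta$ and $1-\Delta$, both in $[0,1]$ since $0 \leq \Delta \leq \tfrac12$) of $v \in B(0,r) \subseteq K$ and $\hat{\theta} \in K$. By convexity of $K$, this point lies in $K$. As $v$ ranges over $B(0,r)$, $\Delta v$ ranges over $B(0,\Delta r)$, so $B(Z, \Delta r) = Z + B(0, \Delta r) \subseteq K$, which finishes the proof.

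This shortcut matches, but streamlines, the geometric picture in Figure \ref{convex_hull_diagram}: the convex hull of $B(0,r) \cup \{\hat{\theta}\}$ is a cone-like region whose ``slice'' at the linear-interpolation parameter $s = 1-\Delta$ between the ball's center and $\hat{\theta}$ is exactly the full-dimensional ball $B((1-\Delta)\hat{\theta}, \Delta r) = B(Z, \Delta r)$; the similar-triangles computation in the overview is one way to verify this scaling, but I would simply present the one-line convex-combination identity above, since it immediately yields the required containment without appealing to planar geometry. There is no real obstacle: the only thing to check is that the coefficient $\Delta$ is nonnegative and at most $1$, which is guaranteed by the hypothesis $0 \leq \Delta \leq \tfrac12$, and that $\hat{\theta}$ is in fact in $K$, which is precisely the hypothesis $\frac{1}{1-\Delta}Z \in K$.
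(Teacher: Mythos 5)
Your proof is correct and cleaner than the paper's. You verify the key algebraic identity correctly: for $v \in B(0,r)$,
\[
Z + \Delta v = (1-\Delta)\,\hat{\theta} + \Delta v, \qquad \hat{\theta} := \tfrac{1}{1-\Delta}Z,
\]
and since $\hat{\theta} \in K$ by hypothesis, $v \in B(0,r) \subseteq K$, and $\Delta \in [0,1]$, convexity gives $Z + \Delta v \in K$; letting $v$ range over $B(0,r)$ yields $B(Z,\Delta r) \subseteq K$, i.e.\ $Z \in \mathrm{int}_{\Delta r}(K)$.

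The paper's proof establishes the same containment but via a more geometric route: it forms the convex hull $\mathcal{C}$ of $B(0,r) \cup \{\hat{\theta}\}$, reduces to a two-dimensional cross-section, and uses a similar-triangles computation (the points $p$ and $q$ in Figure~\ref{convex_hull_diagram}) to compute the exact inradius $h = \Delta r$ of $\mathcal{C}$ at the point $Z$. That argument requires arguing why the 2D picture captures the distance in $\mathbb{R}^d$ and implicitly assumes $Z \neq 0$ (the ratio $\|Z-\hat\theta\|/\|\hat\theta\|$ is indeterminate at $Z=0$, though the statement is trivial there since $B(0,\Delta r) \subseteq B(0,r) \subseteq K$). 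Your convex-combination identity bypasses all of this: it needs no figure, no reduction to the plane, no case distinction, and only proves the inequality $h \geq \Delta r$ actually needed by the lemma rather than computing $h$ exactly. It is a strict simplification and I would prefer it.
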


\begin{proof}
Let $\mathcal{C}$ be the convex hull of $B(0,r) \cup \left\{\frac{1}{1-\Delta} Z \right\}$.
Since $B(0,r) \subseteq  K$ and $\frac{1}{1-\Delta} Z \in 
K$, we have that the convex hull $\mathcal{C} \subseteq K$.

Let $h := \max\left\{ s>0 : B\left(Z, \, \, s\right) \in \mathcal{C} \right\}$.  Defining the point $\hat{\theta} := \frac{1}{1-\Delta} Z$, the point $p$ to be a point such that the line $p\hat{\theta}$ is tangent to $B(0,r)$, and the point $q$ to be a point on $p\hat{\theta}$ which minimizes the distance $\|q-Z\|_2$ (see  Figure \ref{convex_hull_diagram}).
Then we have that the triangles $0p\hat{\theta}$ and $Z q \hat{\theta}$ are similar triangles, since $\angle 0 p \hat{\theta}$ and $\angle Z q \hat{\theta}$ are both right angles.  Thus, $\frac{\|Z - q\|_2}{r} = \frac{\|Z - \hat{\theta}\|_2}{\|\hat{\theta}-0\|_2}$.

Therefore,
\begin{eqnarray*}
h & =& \|Z - q\|_2\\
& =& \frac{\|Z - \hat{\theta}\|_2}{\|\hat{\theta}-0\|_2} \times r\\
& =& \frac{\|Z - \frac{1}{1-\Delta} Z\|_2}{\|\frac{1}{1-\Delta} Z\|_2} \times r\\
& =& \frac{\|\frac{\Delta}{1-\Delta} Z\|_2}{\|\frac{1}{1-\Delta} Z\|_2} \times r\\
& = & \Delta \times r\\
\end{eqnarray*}
Thus, $Z \in \mathrm{int}_{\Delta r}(\mathcal{C}) \subseteq \mathrm{int}_{\Delta r}(K).$ 

\end{proof}

\subsection{Bounding the acceptance probability}\label{sec_acceptance_probability}
The following lemma allows us to bound the expected number of oracle calls in Algorithm \ref{alg_TV_to_pure}.

\begin{lemma} \label{lemma_rejection_probability}
For any $0 \leq \Delta \leq \frac{1}{4}$, we have that

\begin{equation*}
    \mathbb{P}_{Z \sim \pi}((Z \in (1-\Delta) K) \geq (1- \Delta)^d e^{-2L \Delta R}.
\end{equation*}
Hence, since by Lemma \ref{Lemma_cvx_hull} $(1-\Delta) K \subseteq \mathrm{int}_{\Delta r}(K)$, we also have that
\begin{equation*}
    \mathbb{P}_{Z \sim \pi}(Z \in \mathrm{int}_{\Delta r}(K)) \geq (1- \Delta)^d e^{-2L \Delta R}.
\end{equation*}

\end{lemma}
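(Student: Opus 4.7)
The plan is to write the probability as the ratio
\[
\mathbb{P}_{Z \sim \pi}(Z \in (1-\Delta) K) \;=\; \frac{\int_{(1-\Delta)K} e^{-f(z)}\,dz}{\int_{K} e^{-f(z)}\,dz},
\]
and then to bound the numerator from below by comparing it to the denominator via the linear change of variables $z = (1-\Delta) w$. This substitution maps $K$ bijectively onto $(1-\Delta)K$, produces a Jacobian factor of $(1-\Delta)^d$, and reduces the problem to comparing the integrand $e^{-f((1-\Delta)w)}$ to $e^{-f(w)}$ pointwise on $K$ using the Lipschitz assumption.

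After changing variables, I will have
\[
\int_{(1-\Delta)K} e^{-f(z)}\,dz \;=\; (1-\Delta)^d \int_K e^{-f((1-\Delta)w)}\,dw.
\]
Next I will use the geometric setup: since $K$ contains a ball around the origin and is contained in a ball of radius $R$ (not necessarily centered at the origin), every $w \in K$ satisfies $\|w\| \leq 2R$, so $\|(1-\Delta)w - w\| = \Delta\|w\| \leq 2\Delta R$. Combining this with the $L$-Lipschitz property of $f$ gives $f((1-\Delta)w) \leq f(w) + 2L\Delta R$, hence $e^{-f((1-\Delta)w)} \geq e^{-2L\Delta R}\, e^{-f(w)}$. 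Substituting this pointwise bound inside the integral over $K$ gives exactly the claimed factor $(1-\Delta)^d e^{-2L\Delta R}$ when I form the ratio.

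For the second inequality, I will invoke Lemma \ref{Lemma_cvx_hull} directly: any $Z \in (1-\Delta) K$ can be written as $Z = (1-\Delta) w$ for some $w \in K$, so $\frac{1}{1-\Delta} Z = w \in K$, and Lemma \ref{Lemma_cvx_hull} then certifies $Z \in \mathrm{int}_{\Delta r}(K)$. Hence $(1-\Delta)K \subseteq \mathrm{int}_{\Delta r}(K)$, and the probability bound transfers immediately.

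There is no real obstacle here beyond routine accounting; the one subtle point worth flagging is the appearance of the factor $2$ in the exponent $e^{-2L\Delta R}$. The naive estimate using only $K \subseteq B(0,R)$ would give $\|w\| \leq R$ and hence $e^{-L\Delta R}$, but the containing ball is not guaranteed to be centered at the origin, so the sharp bound is $\|w\| \leq 2R$, which is what produces the stated factor. I will make sure to carry this through explicitly rather than rely on the informal WLOG rescaling mentioned in the overview.
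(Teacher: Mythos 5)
Your proof is correct and is essentially the same argument as the paper's. What the paper phrases in terms of a push-forward distribution $\tilde{\pi}$ (the law of $(1-\Delta)Z$ for $Z\sim\pi$, together with the normalizing-constant relation $\tilde c = (1-\Delta)^{-d}c$) is exactly your change of variables $z=(1-\Delta)w$ with Jacobian $(1-\Delta)^d$; the pointwise Lipschitz comparison of $f((1-\Delta)w)$ to $f(w)$ is then identical in both. The second claim is handled the same way in both: $(1-\Delta)K\subseteq\mathrm{int}_{\Delta r}(K)$ is read off from Lemma~\ref{Lemma_cvx_hull}. The one small discrepancy is cosmetic: you attribute the factor $2$ in $e^{-2L\Delta R}$ to not assuming the outer ball is centered at the origin (so $\|w\|\le 2R$), whereas the paper works under the normalization $K\subseteq B(0,R)$ and gets the $2$ from bounding $\tfrac{\Delta}{1-\Delta}\le 2\Delta$ (using $\Delta\le\tfrac12$). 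Either route yields the stated constant, and under the paper's normalization your change-of-variables direction actually gives the slightly tighter $e^{-L\Delta R}$; there is no gap, just a different bookkeeping of where the harmless factor of $2$ enters.
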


\begin{proof}
Let $c = (\int_{K} e^{-f\left(\theta\right)} \mathrm{d}\theta)^{-1}$ be the normalizing constant of $\pi$, that is, $\pi(\theta) = c e^{-f(\theta)}$ for $\theta \in K$.
And let $\tilde{\pi}$ be the distribution
\begin{equation*}
    \tilde{\pi}(\theta) = \begin{cases} 
     \tilde{c} e^{-f\left(\frac{1}{1- \Delta}\theta\right)} & \textrm{ if } \theta \in (1-\Delta)K \\
      0 & \textrm{otherwise} 
   \end{cases}
    \end{equation*}
where $\tilde{c} = (\int_{(1-\Delta)K} e^{-f\left(\frac{1}{1- \Delta}\theta\right)} \mathrm{d}\theta)^{-1}$ is the normalizing constant of $\tilde{\pi}$.
In other words, if $Z \sim \pi$ then we have $(1-\Delta)Z \sim \tilde{\pi}$.
Then 

\begin{equation} \label{eq_a1}
    \tilde{c} = \left(\frac{1}{1- \Delta}\right)^d c.
\end{equation}
Thus,
\begin{equation} \label{eq_a2}
    \frac{\tilde{\pi}(\theta)}{\pi(\theta)} = \frac{\tilde{c}}{c} e^{f(\theta) - f\left(\frac{1}{1- \Delta}\theta\right)} = \left(\frac{1}{1- \Delta}\right)^d e^{f(\theta) - f\left(\frac{1}{1- \Delta}\theta\right)} \qquad \forall \theta \in (1-  \Delta) K.
    \end{equation}
Since $K \subseteq B(0,R)$, we have that 
\begin{equation} \label{eq_a2b}
    \left\|\frac{1}{1-\Delta}\theta - \theta \right\|_2 = \frac{\Delta}{1-\Delta} \|\theta \|_2 \leq 2\Delta \|\theta \|_2  \leq 2\Delta R,
\end{equation}
for all $\theta \in K$.
Therefore, since $f$ is $L$-Lipschitz, Equations \eqref{eq_a1} and \eqref{eq_a2} imply that
\begin{equation} \label{eq_a3}
    \left(\frac{1}{1- \Delta}\right)^d e^{- 2L \Delta R} \leq \frac{\tilde{\pi}(\theta)}{\pi(\theta)}  \leq  \left(\frac{1}{1- \Delta}\right)^d e^{2L \Delta R} \qquad \forall \theta \in (1-  \Delta) K.
\end{equation}
Therefore,

\begin{align*}
    \mathbb{P}_{Z \sim \pi}(Z \in (1-\Delta)K)
    &= \int_{\theta \in (1-\Delta) K} \pi(\theta) \mathrm{d} \theta\\
    &\stackrel{\textrm{Eq. }\ref{eq_a3}}{\geq}  \int_{\theta \in (1-\Delta) K} (1- \Delta)^d e^{- 2L \Delta R} \tilde{\pi}(\theta) \mathrm{d} \theta\\
    &=  (1-\Delta)^d e^{-2L \Delta R} \int_{\theta \in (1-\Delta) K} \tilde{\pi}(\theta) \mathrm{d} \theta\\
    &= (1- \Delta)^d e^{- 2L \Delta R}.
\end{align*}
\end{proof}

\begin{corollary} \label{corr_rejection_probability}
 For any $0 \leq \Delta \leq \frac{1}{2}$, we have that

\begin{equation*}
    \mathbb{P}_{Z \sim \pi}(Z \in \mathrm{int}_{\Delta r}((1-\Delta)K)) \geq [(1- \Delta)^d e^{- 2L \Delta R}]^2.
\end{equation*}
\end{corollary}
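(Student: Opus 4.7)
The plan is to iterate Lemma \ref{lemma_rejection_probability} twice, first on the target body $K$ and then on the shrunk body $K' := (1-\Delta)K$. The key observation is that once we condition on $Z \in K'$, the conditional distribution is itself proportional to $e^{-f}$ on $K'$, with $f$ still $L$-Lipschitz, so Lemma \ref{lemma_rejection_probability} re-applies verbatim to the restricted distribution.

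Concretely, the first application of Lemma \ref{lemma_rejection_probability} gives
$$\mathbb{P}_{Z \sim \pi}(Z \in (1-\Delta)K) \;\geq\; (1-\Delta)^d e^{-2L\Delta R}.$$
For the second application, let $\pi'$ denote the restriction of $\pi$ to $K'$ (renormalized). Then $\pi' \propto e^{-f}$ on $K'$, with $B(0,(1-\Delta)r) \subseteq K' \subseteq B(0,R)$, so the lemma applies to $\pi'$ and $K'$ with the same $\Delta$ to yield
$$\mathbb{P}_{Z \sim \pi'}(Z \in (1-\Delta)K') \;\geq\; (1-\Delta)^d e^{-2L\Delta(1-\Delta)R} \;\geq\; (1-\Delta)^d e^{-2L\Delta R}.$$
By the geometric inclusion in Lemma \ref{Lemma_cvx_hull}, applied now to $K'$, the shrunk set $(1-\Delta)K'$ is contained in $\mathrm{int}_{\Delta r}((1-\Delta)K)$; strictly, the argument as originally stated produces only the inclusion $(1-\Delta)K' \subseteq \mathrm{int}_{\Delta(1-\Delta)r}(K')$, but one recovers the correct radius $\Delta r$ by a direct convex-combination check---writing $(z+v)/(1-\Delta)$ for $z \in (1-\Delta)K'$ and $\|v\| \leq \Delta r$ as a convex combination of a point of $K$ and a point of the inscribed ball $B(0,r) \subseteq K$.

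Combining the two estimates via the law of total probability gives
$$\mathbb{P}_{Z \sim \pi}\bigl(Z \in \mathrm{int}_{\Delta r}((1-\Delta)K)\bigr) \;=\; \mathbb{P}_{Z \sim \pi}(Z \in (1-\Delta)K)\cdot\mathbb{P}_{Z \sim \pi'}\bigl(Z \in \mathrm{int}_{\Delta r}((1-\Delta)K)\bigr) \;\geq\; \bigl[(1-\Delta)^d e^{-2L\Delta R}\bigr]^2,$$
as claimed. The main---and only mildly subtle---obstacle is matching the inner radius $\Delta r$ rather than the naive $\Delta(1-\Delta)r$ when re-invoking the geometric lemma on $K'$; this is a bookkeeping issue resolved by the direct convex-combination argument above and does not affect the overall form of the bound.
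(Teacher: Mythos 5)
You follow the paper's own route closely: apply Lemma~\ref{lemma_rejection_probability} first to $(K,\pi)$, then to the shrunk body $K' = (1-\Delta)K$ with the conditioned distribution $\pi^\dagger = \pi|_{K'}$, and multiply via total probability. That decomposition is exactly what the paper does, and your observation that conditioning on $Z\in K'$ preserves $L$-Lipschitzness of the log-density is the same observation the paper makes.

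You also correctly flag a genuine bookkeeping issue: the inscribed ball of $K'$ has radius $(1-\Delta)r$, not $r$, so a literal re-application of Lemma~\ref{Lemma_cvx_hull} to $K'$ gives $(1-\Delta)K' \subseteq \mathrm{int}_{\Delta(1-\Delta)r}(K')$ rather than $\mathrm{int}_{\Delta r}(K')$ --- and since $\mathrm{int}_{\Delta r}(K') \subsetneq \mathrm{int}_{\Delta(1-\Delta)r}(K')$, the resulting lower bound is on the probability of the wrong (larger) set. It is worth noting that the paper's own proof glides over this point without comment.

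However, your proposed fix does not work. Take $z = (1-\Delta)^2 w$ with $w\in K$ and $\|v\|\le \Delta r$, and try to write $(z+v)/(1-\Delta) = (1-\Delta)w + \tfrac{v}{1-\Delta}$ as $\lambda p + (1-\lambda)q$ with $p\in K$, $q\in B(0,r)$. Since $(1-\Delta)w$ can range over all of $(1-\Delta)K$, one is forced to take $\lambda \ge 1-\Delta$, and then $\|q\| = \|v\|/[(1-\lambda)(1-\Delta)] \ge r/(1-\Delta) > r$. So $q$ lands in $B(0,r/(1-\Delta))$, not $B(0,r)$; the convex-combination check does \emph{not} recover radius $\Delta r$. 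What one can legitimately extract is the inclusion $(1-2\Delta)K \subseteq \mathrm{int}_{\Delta r}((1-\Delta)K)$ (by applying Lemma~\ref{Lemma_cvx_hull} to $K'$ with shrink factor $\Delta' = \Delta/(1-\Delta)$), which combined with Lemma~\ref{lemma_rejection_probability} at scale $2\Delta$ yields a bound of the form $(1-2\Delta)^d e^{-cL\Delta R}$. Since $1-2\Delta \le (1-\Delta)^2$, this is strictly weaker than the stated $[(1-\Delta)^d e^{-2L\Delta R}]^2$; the weaker bound is enough for the downstream use in Corollary~\ref{cor_runtime}, but it does not exactly reproduce the corollary as written, and your argument should not claim otherwise.
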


\begin{proof}
First, we apply Lemma \ref{lemma_rejection_probability} to the convex body $K$ and the distribution $\pi$ to show that
\begin{equation}\label{eq_c1}
 \mathbb{P}_{Z \sim \pi}((Z \in (1-\Delta)K) \geq (1- \Delta)^d e^{- 2L \Delta R}.
 \end{equation}
 Next, we let $\pi^{\dagger}(\theta) \propto e^{-f(\theta)} \mathbbm{1}\{\theta \in (1-\Delta) K\}$, and we apply  Lemma \ref{lemma_rejection_probability} again, but this time to the convex body $(1-\Delta)K$ (which, like $K$, is contained in $B(0,R)$) and the distribution $\pi^{\dagger}$ (which, like $\pi$, has $L$-Lipschitz log-density) to show that
 \begin{equation} \label{eq_c2}
    \mathbb{P}_{W \sim \pi^{\dagger}}(W \in \mathrm{int}_{\Delta r}((1-\Delta) K)) \geq (1- \Delta)^d e^{- 2L \Delta R}.
\end{equation}
Thus,

\begin{align*}
 &\mathbb{P}_{Z \sim \pi}(Z \in \mathrm{int}_{\Delta r}((1-\Delta) K)) 
 \\&=  \mathbb{P}_{Z \sim \pi}\left(Z \in \mathrm{int}_{\Delta r}((1-\Delta) K) \bigg |  Z \in (1-\Delta)K \right) \times  \mathbb{P}_{Z \sim \pi}((Z \in (1-\Delta)K)
 \\
 & =  \mathbb{P}_{W \sim \pi^{\dagger}}(W \in \mathrm{int}_{\Delta r}((1-\Delta)K)) \times \mathbb{P}_{Z \sim \hat{\pi}}((Z \in (1-\Delta)K)\\
 & \stackrel{\textrm{Eq. }\eqref{eq_c1}, \eqref{eq_c2}}{\geq}  [(1- \Delta)^d e^{- 2L \Delta R}]^2.
 \end{align*}
 \end{proof}

\begin{corollary} \label{cor_runtime}
Algorithm \ref{alg_TV_to_pure} finishes in $\tau$ calls to the sampling oracle, where 
\begin{equation*}
    \left(\frac{1}{2}\right)^t \leq \mathbb{P}(\tau \geq t) \leq \left(\frac{1}{2} + \frac{\epsilon}{8\tau_{\mathrm{max}}}\right)^t \leq \left(\frac{2}{3}\right)^t \qquad \qquad \forall t \in [\tau_{\mathrm{max}}],
\end{equation*}
and, hence,
\begin{equation*}
    \mathbb{E}[\tau] \leq 3.
\end{equation*}

\end{corollary}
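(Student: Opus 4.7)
The plan is to reduce the bound on $\tau$ to a per-iteration acceptance probability that is close to $\tfrac{1}{2}$. In one pass of the for loop the algorithm halts iff (a) the computed point $\hat\theta = \tfrac{1}{1-\Delta}(\theta+\Delta r\xi)$ lies in $K$, and (b) the independent fair coin flip comes up ``output.'' Since the coin flip alone rejects with probability $\tfrac{1}{2}$ independently of everything else, trivially $\mathbb{P}(\tau\ge t)\ge(\tfrac12)^t$, giving the lower bound in the corollary. So the work is all in upper-bounding the non-halting probability, which equals $1-\tfrac12\,\mathbb{P}(\hat\theta\in K)$.

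Next I would translate the event $\hat\theta\in K$ into a geometric statement on $\theta$. Since $\hat\theta\in K$ is equivalent to $\theta+\Delta r\xi\in(1-\Delta)K$, and since $\xi\in B(0,1)$, a sufficient condition is $\theta\in\mathrm{int}_{\Delta r}((1-\Delta)K)$. Therefore
\begin{equation*}
\mathbb{P}(\hat\theta\in K)\ \ge\ \mathbb{P}_{\theta\sim\mu}\!\left(\theta\in\mathrm{int}_{\Delta r}((1-\Delta)K)\right)\ \ge\ \mathbb{P}_{\theta\sim\pi}\!\left(\theta\in\mathrm{int}_{\Delta r}((1-\Delta)K)\right)-\delta,
\end{equation*}
where the last step uses $\|\mu-\pi\|_{\mathrm{TV}}\le\delta$. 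Now I would invoke Corollary \ref{corr_rejection_probability} to bound the $\pi$-probability from below by $[(1-\Delta)^d e^{-2L\Delta R}]^2$, and then use the parameter settings $\Delta\le\tfrac{\epsilon}{512\tau_{\max}\max(d,LR)}$ and $\delta\le\tfrac{1}{64}\epsilon(\tfrac{R}{\Delta r})^{-d}e^{-LR}$ together with $1-x\ge e^{-2x}$ for small $x$ to conclude that this acceptance probability is at least $1-\tfrac{\epsilon}{4\tau_{\max}}$ (and in particular $\ge 1-\delta$ is absorbed into this). The key calculation is just expanding $(1-\Delta)^d\ge 1-d\Delta$ and $e^{-2L\Delta R}\ge 1-2L\Delta R$, each of which contributes at most $\tfrac{\epsilon}{c\,\tau_{\max}}$ for a mild constant $c$, and $\delta$ is exponentially smaller than any of these terms.

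Combining, the per-iteration non-halting probability is at most $\tfrac{1}{2}+\tfrac{\epsilon}{8\tau_{\max}}$, and by independence of the iterations this gives
\begin{equation*}
\mathbb{P}(\tau\ge t)\ \le\ \left(\tfrac12+\tfrac{\epsilon}{8\tau_{\max}}\right)^{t}\qquad\forall\,t\in[\tau_{\max}],
\end{equation*}
which together with the trivial lower bound yields the two-sided bound claimed in the corollary. Since $\epsilon\le 1$ and $\tau_{\max}\ge 5$, the quantity $\tfrac12+\tfrac{\epsilon}{8\tau_{\max}}$ is at most $\tfrac{2}{3}$, so $\mathbb{P}(\tau\ge t)\le(\tfrac23)^t$ for every $t\le\tau_{\max}$, and trivially $\mathbb{P}(\tau\ge t)=0$ for $t>\tau_{\max}+1$. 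Finally,
\begin{equation*}
\mathbb{E}[\tau]\ =\ \sum_{t\ge 1}\mathbb{P}(\tau\ge t)\ \le\ \sum_{t=1}^{\tau_{\max}}\left(\tfrac23\right)^{t}+\mathbb{P}(\tau\ge\tau_{\max}+1)\ \le\ 2+1\ =\ 3,
\end{equation*}
completing the proof. The only real obstacle is the bookkeeping in the second paragraph: one has to keep the dependence on $\Delta$, $d$, $L$, $R$, and the TV error $\delta$ under control so that everything is absorbed into $\tfrac{\epsilon}{8\tau_{\max}}$, but the parameter choices listed at the beginning of Section \ref{sec_TV_proof} are tuned exactly for this purpose.
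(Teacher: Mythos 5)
Your proof is correct and follows essentially the same route as the paper's: the key lemma is Corollary \ref{corr_rejection_probability}, the per-iteration halting probability is $\tfrac12\mathbb{P}(\hat\theta\in K)$, the parameter settings push this to at least $\tfrac12 - \tfrac{\epsilon}{8\tau_{\max}}$, and the rest follows from independence of the iterations and summing the geometric tail. If anything, you are slightly more careful than the paper in two places: you make the TV-error step $\mathbb{P}_{\theta\sim\mu}(\cdot)\ge\mathbb{P}_{\theta\sim\pi}(\cdot)-\delta$ explicit (the paper writes $\mathbb{P}_{Z\sim\pi}$ directly and silently absorbs the $\delta$), and you explicitly note that the lower bound $(\tfrac12)^t$ comes from the independent fair coin alone, and handle the $\tau=\tau_{\max}+1$ fallback term separately in the expectation. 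One small remark: your justification ``$\epsilon\le 1$ and $\tau_{\max}\ge 5$'' for $\tfrac12+\tfrac{\epsilon}{8\tau_{\max}}\le\tfrac23$ should be phrased in terms of what the parameter settings actually guarantee (one only needs $\tau_{\max}\ge\tfrac{3\epsilon}{4}$), but the conclusion is unaffected.
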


\begin{proof}
By Corollary \ref{corr_rejection_probability}, we have that, conditional on Algorithm \ref{alg_TV_to_pure} reaching some iteration $i \in \mathbb{N}$, the probability that Algorithm $i$ will reject $\hat{\theta}$ at step $i$ is
\begin{align} \label{eq_a4}
      \mathbb{P}(\hat{\theta} \textrm{ rejected at step $i$}| \textrm{step $i$ reached}) &\leq \frac{1}{2} + \frac{1}{2} \left(1- \mathbb{P}_{Z \sim \pi}(Z \in \mathrm{int}_{\Delta r}((1-\Delta)K))\right)\nonumber\\
      &\stackrel{\textrm{Corr. }\ref{corr_rejection_probability}}{\leq}  \frac{1}{2} + \frac{1}{2} \left(1-  [(1- \Delta)^d e^{- 2L \Delta R}]^2 \right) \nonumber\\
      &\leq \left(\frac{1}{2} + \frac{\epsilon}{8\tau_{\mathrm{max}}}\right)^t\nonumber\\
      &\leq \frac{2}{3},
\end{align}
where the second-to-last inequality holds because $\Delta \leq \frac{\epsilon}{512\tau_{\mathrm{max}} \max(d, L R)}$.
The last inequality holds because $\tau_{\mathrm{max}} \geq \frac{\epsilon}{3}$.
Hence,
\begin{eqnarray} \label{eq_a5}
    \mathbb{P}(\tau \geq t) &\leq  \Pi_{i=1}^t \mathbb{P}(\hat{\theta} \textrm{ rejected at step $i$}| \textrm{step $i$ reached})\nonumber\\
    & \stackrel{\textrm{Eq. }\eqref{eq_a4}}{\leq}   \left(\frac{2}{3}\right)^t  \qquad \qquad \forall t \geq 0,
\end{eqnarray}
and
\begin{align}
    \mathbb{E}[\tau] &\leq \sum_{t=1}^{\infty} \mathbb{P}(\tau \leq t)\nonumber\\
    & = \sum_{t=1}^{\infty} \mathbb{P}(\tau \geq t)\nonumber\\
    & \stackrel{\textrm{Eq. }\ref{eq_a5}}{\leq} \sum_{t=1}^{\infty} \left(\frac{2}{3}\right)^t\nonumber\\
    &= \frac{1}{1-\frac{2}{3}}\nonumber\\
    &= 3.
\end{align}
\end{proof}

\noindent

\subsection{Bounding the infinity-distance error}\label{sec_intinity_distance_bounding}

The next lemma  allows us to provide a lower bound on the density $\nu$ of the output $\hat{\theta}$.
Let $\tilde{\nu}$ be the distribution of the random variable  $Z = y  + \Delta r \xi$, where $y \sim \mu$, and $\xi \sim \mathrm{unif}(B(0,1))$.
Let $\nu^\ast$ be the distribution of $\left(\frac{1}{1- \Delta}\right)Z$ conditional on the event that $\left(\frac{1}{1- \Delta}\right)Z \in K$.

\begin{lemma} \label{lemma_lower_bound}
For every $\theta \in K$ we have
\begin{equation*}
    \nu^\ast(\theta)  \geq  e^{-\frac{\epsilon}{2}}\pi(\theta), 
\end{equation*}
and
\begin{equation*}
\nu(\theta) \geq e^{-\epsilon} \pi(\theta).
\end{equation*}

\end{lemma}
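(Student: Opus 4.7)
The strategy is to first control the smoothed (but not stretched) density $\tilde{\nu}$ pointwise on the $\Delta r$-interior of $K$, then transfer this bound to $\nu^{\ast}$ via the change-of-variables $Z \mapsto Z/(1-\Delta)$ and Lemma \ref{Lemma_cvx_hull}, and finally pass from $\nu^{\ast}$ to the mixture $\nu$ by controlling the failure probability of the \textsf{for} loop.

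\textbf{Step 1: Lower bounding $\tilde{\nu}$ on $\mathrm{int}_{\Delta r}(K)$.} Since $\tilde{\nu}$ is the convolution of $\mu$ with $\mathrm{Unif}(B(0,\Delta r))$, for every $z \in \mathrm{int}_{\Delta r}(K)$ the ball $B(z,\Delta r)$ lies in $K$ and
$\tilde{\nu}(z) = \mu(B(z,\Delta r))/\mathrm{Vol}(B(0,\Delta r))$.
The TV hypothesis gives $\mu(B(z,\Delta r)) \ge \pi(B(z,\Delta r)) - \delta$, and $L$-Lipschitzness of $f$ yields $\pi(B(z,\Delta r)) \ge e^{-L\Delta r}\,\pi(z)\,\mathrm{Vol}(B(0,\Delta r))$. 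To see that $\delta$ is negligible compared to $\pi(B(z,\Delta r))$, I will use the crude bound $\pi(z) \ge e^{-LR}/\mathrm{Vol}(B(0,R))$ (which follows from $L$-Lipschitzness and $K \subseteq B(0,R)$), giving $\pi(B(z,\Delta r)) \ge e^{-LR}(\Delta r / R)^d e^{-L\Delta r}$. Comparing against $\delta \le \tfrac{1}{64}\eps (R/\Delta r)^{-d} e^{-LR}$ shows that $\delta/\pi(B(z,\Delta r)) \le \eps/32$, hence
\begin{equation*}
\tilde{\nu}(z) \;\ge\; (1 - \eps/32)\,e^{-L\Delta r}\,\pi(z) \;\ge\; e^{-\eps/8}\,\pi(z) \qquad \forall z \in \mathrm{int}_{\Delta r}(K),
\end{equation*}
where the last inequality uses that $L\Delta r \le \eps/512$ by the hyperparameter choice.

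\textbf{Step 2: From $\tilde{\nu}$ to $\nu^{\ast}$.} For any $\theta \in K$, Lemma \ref{Lemma_cvx_hull} gives $(1-\Delta)\theta \in \mathrm{int}_{\Delta r}(K)$. Under the map $Z \mapsto Z/(1-\Delta)$, the push-forward density is $(1-\Delta)^d\,\tilde{\nu}((1-\Delta)\theta)$, and conditioning on $Z/(1-\Delta) \in K$ only divides by a probability $\le 1$, so $\nu^{\ast}(\theta) \ge (1-\Delta)^d\,\tilde{\nu}((1-\Delta)\theta)$. Applying Step 1 and then $L$-Lipschitzness to compare $\pi((1-\Delta)\theta)$ with $\pi(\theta)$ (whose ratio is at least $e^{-L\Delta\|\theta\|_2} \ge e^{-L\Delta R}$), I obtain
\begin{equation*}
\nu^{\ast}(\theta) \;\ge\; (1-\Delta)^d\, e^{-\eps/8}\, e^{-L\Delta R}\,\pi(\theta) \;\ge\; e^{-\eps/2}\,\pi(\theta),
\end{equation*}
using $(1-\Delta)^d \ge e^{-2d\Delta}$ and the hyperparameter bounds $2d\Delta, L\Delta R \le \eps/(512\tau_{\max})$.

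\textbf{Step 3: From $\nu^{\ast}$ to $\nu$.} The output of Algorithm \ref{alg_TV_to_pure} is a mixture: it equals a sample from $\nu^{\ast}$ with probability $1 - \mathbb{P}(\tau > \tau_{\max})$ and a sample from $\mathrm{Unif}(B(a,r))$ otherwise. Dropping the (nonnegative) uniform fallback contribution gives $\nu(\theta) \ge (1 - \mathbb{P}(\tau > \tau_{\max}))\,\nu^{\ast}(\theta)$. By Corollary \ref{cor_runtime}, $\mathbb{P}(\tau > \tau_{\max}) \le (2/3)^{\tau_{\max}}$, which is much smaller than $1 - e^{-\eps/2}$ given $\tau_{\max} \ge 5d\log(R/r) + 5LR + \eps$, so $\nu(\theta) \ge e^{-\eps/2}\,\nu^{\ast}(\theta) \ge e^{-\eps}\,\pi(\theta)$.

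\textbf{Main obstacle.} The delicate part is Step 1: ensuring that the choice of $\delta$ is small enough that the error of the input sampler is dominated by the local probability mass $\pi(B(z,\Delta r))$ uniformly in $z \in \mathrm{int}_{\Delta r}(K)$. The bound $\delta \le \tfrac{1}{64}\eps(R/\Delta r)^{-d}e^{-LR}$ is sharp: the volume ratio $(\Delta r/R)^d$ and the Lipschitz factor $e^{-LR}$ are exactly the two quantities that control how small the smoothed probability mass can be at the worst point. After Step 1, Steps 2 and 3 are mostly bookkeeping to absorb the losses $(1-\Delta)^d$, $e^{-L\Delta R}$, and $\mathbb{P}(\tau > \tau_{\max})$ into the final $e^{-\eps}$.
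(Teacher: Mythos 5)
Your proof is correct and follows essentially the same three-stage structure as the paper's proof: (1) a pointwise lower bound on the smoothed density $\tilde{\nu}$ over $\mathrm{int}_{\Delta r}(K)$ obtained by combining the TV budget $\delta$ with the Lipschitz and volume bounds; (2) the change-of-variables under $Z \mapsto Z/(1-\Delta)$ together with Lemma~\ref{Lemma_cvx_hull} to pass to $\nu^\ast$; and (3) the mixture decomposition with Corollary~\ref{cor_runtime} to pass to $\nu$. The only cosmetic difference is in Step~1, where you express the $\delta$-loss as a multiplicative factor $(1 - \eps/32)$ against the local mass $\pi(B(z,\Delta r))$, whereas the paper carries it as an additive term $\delta/\mathrm{Vol}(B(0,\Delta r))$ and rebounds it against $\pi(\theta)$; both are the same computation.
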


\begin{proof}
For all $\theta \in \mathrm{int}_{\Delta r}(K)$, we have
\begin{align} \label{eq_b3}
    \tilde{\nu}(\theta) &= \frac{1}{\mathrm{Vol}(B(0, \Delta r))} \int_{w \in B(0, \Delta r)} \mu(\theta + w) \mathrm{d}w \nonumber \\
    &\geq \frac{1}{\mathrm{Vol}(B(0, \Delta r))} \left[ \int_{w \in B(0, \Delta r)}  \pi(\theta + w) \mathrm{d}w  - \delta \right] \nonumber\\
    & \geq \pi(\theta) e^{-L \Delta r} - \frac{\delta}{\mathrm{Vol}(B(0, \Delta r))} \nonumber\\
       & = \pi(\theta) e^{-L \Delta r} - \frac{\delta}{\mathrm{Vol}(B(0, \Delta r))} \times \pi(\theta) \times \frac{1}{\pi(\theta)}\nonumber\\
    & \geq  \pi(\theta) e^{-L \Delta r} - \frac{\delta}{\mathrm{Vol}(B(0, \Delta r))} \times \pi(\theta) \times  \left(  \frac{\max_{w \in K} \pi(w)}{\min_{w \in K} \pi(w)} \times \mathrm{Vol}(B(0, R)) \right )\nonumber\\
    & \geq \pi(\theta) e^{-L \Delta r} - \left(\frac{R}{\Delta r}\right)^d \times \delta \times \pi(\theta) \times e^{L R}\nonumber\\
    & =  \pi(\theta) \times \left [ e^{-L \Delta r} - \left(\frac{R}{\Delta r}\right)^d \times \delta \times e^{L R} \right ]\nonumber\\
    & \geq \pi(\theta) \times e^{-\frac{\epsilon}{8}}.
\end{align}
Where the last inequality holds since $\Delta \leq \frac{\epsilon}{16Lr}$ and $\delta \leq (e^{-\frac{\epsilon}{16}} - e^{-\frac{\epsilon}{8}}) \times \left(\frac{R}{\Delta r}\right)^{-d} e^{-L R}$.
Thus, with probability at least $1-\hat{\delta}$, we have that the conditional distribution  $\tilde{\nu}$ satisfies:
\begin{equation} \label{eq_b1}
\tilde{\nu}(\theta) \geq \pi(\theta) \times e^{-\frac{\epsilon}{8}} \qquad \forall    \theta \in \mathrm{int}_{\Delta r}(K).
\end{equation}
Recall that $\tilde{\nu}$ is the distribution of the random variable  $Z = y  + \Delta r \xi$, where $y \sim \mu$.
And recall that $\nu^\ast$ is the distribution of $(\frac{1}{1- \Delta})Z$ conditional on the event that $(\frac{1}{1- \Delta})Z \in K$.
Then Inequality \eqref{eq_b1} implies that $\forall    \theta \in K$

\begin{align} \label{eq_b2}
\nu^\ast(\theta) &\geq (1- \Delta)^d \tilde{\nu}((1- \Delta)\theta) \nonumber\\
&\stackrel{\textrm{Eq. }\eqref{eq_b1} \textrm{, Lemma } \ref{Lemma_cvx_hull}}{\geq} (1- \Delta)^d \pi((1- \Delta)\theta) \times e^{\frac{-\epsilon}{8}}\nonumber\\
&\geq \pi((1- \Delta)\theta) e^{-\frac{\epsilon}{4}} \nonumber \\
&\geq \pi(\theta) e^{-\frac{\epsilon}{2}}, 
\end{align}
where the second Inequality holds by Inequality \eqref{eq_b1} since $(1- \Delta)\theta \in  \mathrm{int}_{\Delta r}(K)$  by Lemma \ref{Lemma_cvx_hull}. 
 The third inequality holds because $\Delta \leq \frac{\min(1, \epsilon)}{16d}$.
The last inequality holds since $\Delta \leq \frac{\epsilon}{16L R}$ since $K \subseteq B(0,R)$ implies that $\|\Delta \theta\| \leq \Delta R$.

Finally, Corollary \ref{cor_runtime} implies that
\begin{eqnarray*}
    \nu(\theta) &\geq& \mathbb{P}(\tau < \tau_{\mathrm{max}}) \times \nu^\ast(\theta)\\
    &\stackrel{\textrm{Corr. }\ref{cor_runtime}}{\geq}& \left(1- \left(\frac{2}{3}\right)^{\tau_{\mathrm{max}}}\right) \times \nu^\ast(\theta)\\
    &\geq& e^{-\frac{\epsilon}{2}} \times \nu^\ast(\theta)\\
    &\stackrel{\textrm{Eq. }\ref{eq_b2}}{\geq}&  e^{-\epsilon} \times \nu(\theta),
\end{eqnarray*}
where the third inequality holds since $\tau_{\mathrm{max}} \geq \log( \frac{4}{\eps})$ and $\eps \leq 1$.

\end{proof}

\begin{lemma} \label{lemma_upper_bound}
For every $\theta \in K$ we have
\begin{equation*}
    \nu^\ast(\theta) \leq e^{\frac{\epsilon}{2}} \pi(\theta), 
\end{equation*}
and
\begin{equation*}
\nu(\theta) \leq e^{\epsilon} \pi(\theta).
\end{equation*}

\end{lemma}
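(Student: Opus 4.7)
The plan is to mirror the three-stage structure of the proof of Lemma~\ref{lemma_lower_bound}, reversing each inequality. First, for any $\theta \in \mathrm{int}_{\Delta r}(K)$, I would bound the convolved density $\tilde{\nu}(\theta)$ from above by starting from
\[
\tilde{\nu}(\theta) = \frac{1}{\mathrm{Vol}(B(0,\Delta r))}\int_{w \in B(0,\Delta r)} \mu(\theta+w)\, \mathrm{d}w,
\]
using the TV bound $\|\mu - \pi\|_{\mathrm{TV}} \leq \delta$ in the form $\int_{B(0,\Delta r)}\mu(\theta+w)\,\mathrm{d}w \leq \int_{B(0,\Delta r)}\pi(\theta+w)\,\mathrm{d}w + \delta$, and then applying the $L$-Lipschitz property of $f$ to sandwich $\pi(\theta+w) \leq \pi(\theta) e^{L\Delta r}$. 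The error term $\delta/\mathrm{Vol}(B(0,\Delta r))$ is absorbed into $\pi(\theta)$ exactly as in the lower-bound argument, using $\pi(\theta) \geq e^{-LR}/\mathrm{Vol}(B(0,R))$; the parameter choices $\Delta \leq \epsilon/(16Lr)$ and $\delta \leq \frac{1}{64}\epsilon (R/\Delta r)^{-d} e^{-LR}$ ensure $\tilde{\nu}(\theta) \leq \pi(\theta) e^{\epsilon/8}$.

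Second, I would lift this to a bound on $\nu^\ast$. Since $\nu^\ast$ is the push-forward of $\tilde{\nu}$ under $Z \mapsto Z/(1-\Delta)$ conditioned on lying in $K$, the Jacobian contributes a factor $(1-\Delta)^d \leq 1$, which is in the favorable direction here; however, conditioning on $\{Z/(1-\Delta) \in K\}$ divides by a probability that could be as small as $(1-\Delta)^d e^{-2L\Delta R}$ by Lemma~\ref{lemma_rejection_probability} (applied to $\mu$ after adjusting by $\delta$). Combining this with the Lipschitz bound $\pi((1-\Delta)\theta) \leq \pi(\theta) e^{L\Delta R}$ and the hyperparameter bounds $\Delta \leq \epsilon/(16\max(d,LR))$ gives $\nu^\ast(\theta) \leq \pi(\theta) e^{\epsilon/2}$ for all $\theta \in K$.

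Third and finally, I would pass from $\nu^\ast$ to $\nu$. The output distribution decomposes as
\[
\nu(\theta) = \mathbb{P}(\tau \leq \tau_{\max}) \cdot \nu^\ast(\theta) + \mathbb{P}(\tau = \tau_{\max}+1) \cdot \mathrm{Unif}(B(a,r))(\theta).
\]
The first term is already bounded by $e^{\epsilon/2}\pi(\theta)$. For the fallback term, Corollary~\ref{cor_runtime} gives $\mathbb{P}(\tau = \tau_{\max}+1) \leq (2/3)^{\tau_{\max}}$; combined with the choice $\tau_{\max} \geq 5d\log(R/r) + 5LR + \epsilon$, this is at most a small multiple of $\epsilon \cdot (R/r)^{-d} e^{-LR}$, which matches the worst-case ratio $\mathrm{Unif}(B(a,r))(\theta)/\pi(\theta) \leq (R/r)^d e^{LR}$ up to a factor of $\epsilon$. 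Adding the two contributions and using $1 + \epsilon \leq e^{\epsilon/2}$ yields $\nu(\theta) \leq e^{\epsilon}\pi(\theta)$.

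The main obstacle is the fallback term: the uniform distribution on $B(a,r)$ is not itself close to $\pi$ in infinity-distance, so its contribution to $\nu$ must be shown to be dominated by $\epsilon\,\pi(\theta)$ uniformly in $\theta \in K$. This is precisely the reason $\tau_{\max}$ was chosen to scale with $d\log(R/r) + LR$: the geometric tail decay of $\tau$ must overwhelm the worst-case ratio $(R/r)^d e^{LR}$ of the fallback density to $\pi$. All other steps are essentially parallel to those in the proof of Lemma~\ref{lemma_lower_bound}, with inequalities reversed and the factor $(1-\Delta)^d$ replaced where appropriate by $(1-\Delta)^{-d} \leq e^{2d\Delta}$.
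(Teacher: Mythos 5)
Your proposal follows the same three-stage structure as the paper's proof: first bound the convolved density $\tilde{\nu}(\theta)\leq e^{\epsilon/8}\pi(\theta)$ via the TV bound and Lipschitzness, then lift to $\nu^\ast(\theta)\leq e^{\epsilon/2}\pi(\theta)$ by controlling the normalizer coming from conditioning on acceptance, and finally absorb the fallback uniform-ball term using the geometric tail of $\tau$ together with the choice $\tau_{\max}\geq 5d\log(R/r)+5LR+\epsilon$. The one place where you are slightly vague --- transferring the acceptance-probability lower bound from $\pi$ to $\tilde{\nu}$ --- the paper handles via the pointwise comparison $\tilde{\nu}\geq e^{-\epsilon/8}\pi$ on $\mathrm{int}_{\Delta r}(K)$ combined with Corollary~\ref{corr_rejection_probability}, rather than a direct ``adjust $\mu$ by $\delta$'' step, but this is bookkeeping and not a substantive gap.
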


\begin{proof}
For all $\theta \in K$, we have

\begin{align} \label{eq_d1}
   \tilde{\nu}(\theta) &= \frac{1}{\mathrm{Vol}(B(0, \Delta r))} \int_{w \in B(0, \Delta r)} \mu(\theta + w) \mathrm{d}w \nonumber\\
    &\leq \frac{1}{\mathrm{Vol}(B(0, \Delta r))} \left[ \int_{w \in B(0, \Delta r)}  \pi(\theta + w) \mathrm{d}w  + \delta \right] \nonumber\\
    & \leq \pi(\theta) e^{L \Delta r} + \frac{\delta}{\mathrm{Vol}(B(0, \Delta r))}\nonumber\\
       & = \pi(\theta) e^{L \Delta r} + \frac{\delta}{\mathrm{Vol}(B(0, \Delta r))} \times \pi(\theta) \times \frac{1}{\pi(\theta)} \nonumber\\
    & \leq  \pi(\theta) e^{L \Delta r} + \frac{\delta}{\mathrm{Vol}(B(0, \Delta r))} \times \pi(\theta) \times  \left(  \frac{\max_{w \in K} \pi(w)}{\min_{w \in K} \pi(w)} \times \mathrm{Vol}(B(0, R)) \right )\nonumber\\
    & \leq \pi(\theta) e^{L \Delta r} + \left(\frac{R}{\Delta r}\right)^d \times \delta \times \pi(\theta) \times e^{L R}\nonumber\\
    & =  \pi(\theta) \times \left [ e^{L \Delta r} + \left(\frac{R}{\Delta r}\right)^d \times \delta \times e^{L R} \right ]\nonumber\\
    & \leq \pi(\theta) \times e^{\frac{\epsilon}{8}},
\end{align}
where the last inequality holds since $\Delta \leq \frac{\epsilon}{16Lr} $ and $\delta \leq (e^{\frac{\epsilon}{16}} - e^{\frac{\epsilon}{8}}) \times (\frac{R}{\Delta r})^{-d} e^{-LR}$.

Moreover, by Corollary \ref{corr_rejection_probability} and Inequality \eqref{eq_b3} we have that
\begin{align} \label{eq_d2}
    \mathbb{P}_{Z \sim \tilde{\nu}}(Z \in \mathrm{int}_{\Delta r}((1-\Delta)K)) 
    &\stackrel{\textrm{Eq. }\eqref{eq_b3}}{\geq} e^{\frac{-\epsilon}{8}}  \times  \mathbb{P}_{Z \sim \pi}(Z \in \mathrm{int}_{\Delta}((1-\Delta)K)) \nonumber\\
    &\stackrel{\textrm{Corr. }\ref{corr_rejection_probability}}{\geq} e^{\frac{-\epsilon}{8}}  \times [(1- \Delta)^d e^{- L \Delta R}]^2 \nonumber\\
    & \geq  e^{\frac{-\epsilon}{4}}, 
\end{align}
where the last inequality holds since  $\Delta \leq \frac{\min(1, \epsilon)}{64d}$ and $\Delta \leq \frac{\epsilon}{128 LR}$.
Thus, Inequality \eqref{eq_d2} implies that
\begin{equation} \label{eq_d3}
    \int_{\hat{K}} \nu^\ast(\theta) \mathrm{d} \theta \leq e^{\frac{\epsilon}{4}}  \int_{\hat{K}} \tilde{\nu}(\theta) \mathrm{d} \theta.
\end{equation}
Recall that $\tilde{\nu}$ is the distribution of the random variable  $Z = y  + \Delta r \xi$, where $y \sim \mu$. 
And recall that $\nu^\ast$ is the distribution of $(\frac{1}{1- \Delta})Z$ conditional on the event that $(\frac{1}{1- \Delta})Z \in K$.
Therefore, inequality \eqref{eq_d3} implies that, for all $\theta \in K$, 

\begin{align} \label{eq_d4}
    \nu^\ast(\theta) &=  \frac{\int_{K} \nu^\ast(z) \mathrm{d} z}{\int_{K} \tilde{\nu}(z) \mathrm{d} z} \tilde{\nu}((1-\Delta)\theta) \nonumber \\
&\stackrel{\textrm{Eq. }\ref{eq_d3}}{\leq}     e^{\frac{\epsilon}{4}} \times \tilde{\nu}((1- \Delta)\theta) \nonumber\\
&\stackrel{\textrm{Eq. }\ref{eq_d1}}{\leq}  e^{\frac{\epsilon}{4}} \times e^{\frac{\epsilon}{8}} \pi((1- \Delta)\theta) \nonumber\\
&\leq e^{\frac{3\epsilon}{8}} \times e^{\Delta L R}  \pi(\theta) \nonumber\\
&\leq e^{\frac{\epsilon}{2}} \pi(\theta),
\end{align}
where the second-to-last inequality holds because $\pi \propto e^{-f}$ where $f$ is $L$-Lipschitz, and since $\|(1- \Delta)\theta - \theta\| =  \Delta\|\theta\| \leq \Delta R$ because $K \subseteq B(0,R)$.
And the last inequality holds because $\Delta \leq \frac{\epsilon}{32 L R}$.

Finally, Corollary \ref{cor_runtime} implies that
\begin{eqnarray*}
    \nu(\theta) &\leq& \nu^\ast(\theta) + \mathbb{P}(\tau = \tau_{\mathrm{max}}) \times \frac{1}{B(0,r)}\\
    &\stackrel{\textrm{Corr. }\ref{cor_runtime}}{\leq}& e^{\frac{\epsilon}{2}} \pi(\theta) + \left(\frac{2}{3}\right)^{\tau_{\mathrm{max}}} \times \frac{1}{B(0,r)}\\
   &\leq & e^{\frac{\epsilon}{2}} \pi(\theta) + \left(\frac{2}{3}\right)^{\tau_{\mathrm{max}}} \times \frac{1}{B(0,r)} \times \pi(\theta) \times  \left(  \frac{\max_{w \in K} \pi(\theta)}{\min_{w \in K} \pi(\theta)} \times \mathrm{Vol}(B(0, R)) \right )\\
   &\leq & e^{\frac{\epsilon}{2}} \pi(\theta) + \left(\frac{2}{3}\right)^{\tau_{\mathrm{max}}} \times  \left(\frac{R}{r}\right)^d \times e^{L R} \times \pi(\theta)\\
&\leq & e^{\epsilon} \pi(\theta),
\end{eqnarray*}
where the last inequality holds because $\tau_{\mathrm{max}} \geq 5d \log(\frac{R}{r}) + 5 LR + \eps$.

\end{proof}

\subsection{Proof of Theorem \ref{thm_infinity_divergence_sampler}} \label{sec_proof_thm_infinity_divergence_sampler}

\begin{proof}[{\bf of Theorem \ref{thm_infinity_divergence_sampler}}]
 We implement Algorithm \ref{alg_TV_to_pure}, using  the Dikin Walk Markov chain in  \cite{narayanan2017efficient} as a subroutine to compute the TV-bounded sampling oracle for the distribution $\mu$.

To apply Algorithm  \ref{alg_TV_to_pure} and Theorem \ref{thm_TV_to_inf_divergence},  we require that $\|\mu - \pi\|_{\mathrm{TV}} \leq \delta$, where $\delta = \frac{1}{64}\epsilon\times (\frac{R}{\Delta r})^{-d} e^{-LR}$, as well as the following hyperparameter values for Algorithm \ref{alg_TV_to_pure}:
\begin{enumerate}
\item
$\Delta = \frac{\epsilon}{512\tau_{\mathrm{max}} \max(d, L R)}$, and 
\item $\tau_{\mathrm{max}} = 5d \log(\frac{R}{r}) + 5 LR + \eps$.  
\end{enumerate}
To sample from such a distribution $\mu$, we implement the Dikin Walk Markov chain given in Section 3 of \cite{narayanan2017efficient}, with logarithmic-barrier for the polytope $K$ and  hyper-parameters specified by their Condition 2, for $\delta = \frac{1}{64}\epsilon\times (\frac{R}{\Delta r})^{-d} e^{-LR}$.
To provide an initial point $\theta_0$  to the Dikin Walk Markov chain, we sample $\theta_0 \sim \mathrm{unif}(B(0,r))$.
Since $f$ is $L$-Lipschitz and $B(0,r) \subseteq K \subseteq B(0,R)$, the distribution $\mu_0$ of the initial point $\theta_0$ satisfies
\begin{align*}
     \sup_{z\in K} \frac{\mu_0(z)}{\pi(z)} &\leq \frac{1}{\mathrm{Vol}(B(0,r))} \times  \left(  \frac{\max_{z \in K} \pi(\theta)}{\min_{z \in K} \pi(\theta)} \times \mathrm{Vol}(B(0, R)) \right )\\
     & \leq \left(\frac{R}{r}\right)^d \times e^{RL}.
\end{align*}
Thus, the distribution $\mu_0$ of the initial point  $\theta_0$ is $w$-warm with respect to the distribution $\pi$, for $w = \left(\frac{R}{r}\right)^d \times e^{RL}$.

By Lemma 4 of \cite{narayanan2017efficient}, their Dikin Walk Markov chain, with initial point $\theta_0$, outputs a point $\|\mu - \pi\|_{\mathrm{TV}} \leq \delta$ and takes at most 
$$O\left((m^2d^3 + m^2 d L^2 R^2)  \log(\frac{w}{\delta})\right)= O\left((m^2d^3 + m^2 d L^2 R^2) \times \left[LR + d\log\left(\frac{Rd +LRd}{r \eps}\right)\right]\right)$$  steps. 
Moreover,  each iteration takes $O(1)$ function evaluations and $md^{\omega-1}$ arithmetic operations.

\paragraph{Bounding the infinity-distance.}
Since the distribution $\mu$ of the point $\hat{\theta}$ provided by the Dikin Walk Markov chain satisfies $\|\mu - \pi\|_{\mathrm{TV}} \leq \delta$, we have, by Theorem \ref{thm_TV_to_inf_divergence}, that Algorithm \ref{alg_TV_to_pure} outputs a point $\hat{\theta} \in K$, such that the distribution $\nu$ of $\hat{\theta}$ satisfies $\mathrm{d}_{\infty}(\nu, \pi) \leq \epsilon$.

\paragraph{Bounding the number of operations.}
Moreover, by the proof of Theorem \ref{thm_TV_to_inf_divergence}, we also have that  Algorithm \ref{alg_TV_to_pure} finishes in $\tau$ calls to the sampling oracle (computed via the Dikin Walk Markov chain) and $\tau$ calls to the membership oracle, plus $O(\tau d)$ arithmetic operations, where $\mathbb{E}[\tau] \leq 3$ and $\mathbb{P}(\tau \geq t) \leq \left(\frac{2}{3}\right)^t$ for all $\tau \leq \tau_{\mathrm{max}}$.

The membership oracle for $K$ can be computed in $md$ steps since this can be done by checking the inequality $A\theta \leq b$.

Thus,  the total number of steps, when implementing Algorithm \ref{alg_TV_to_pure} with the Dikin Walk Markov chain as subroutine is 
\begin{equation*}
 O\left(\tau \times (m^2d^3 + m^2 d L^2 R^2)  \log(\frac{w}{\delta})\right)
= O\left(\tau \times(m^2d^3 + m^2 d L^2 R^2) \times \left[LR + d\log\left(\frac{Rd +LRd}{r \eps}\right)\right]\right),
\end{equation*}
where $\mathbb{E}[\tau] \leq 3$ and $\mathbb{P}(\tau \geq t) \leq \left(\frac{2}{3}\right)^t$ for all $t \geq 0$, and $\tau \leq \tau_{\mathrm{max}}$ w.p. 1, 
and where each step takes $O(1)$ function evaluations and $md^{\omega-1}$ arithmetic operations.

\end{proof}

\section{Proofs of applications  to differentially private optimization}\label{sec:DP}

\subsection{Proof of Corollary \ref{corr_DP}} \label{Sec_DP_ERM}

To prove Corollary \ref{corr_DP} we will need the following Lemma about the exponential mechanism from \cite{bassily2014private}:

\begin{lemma}[Theorems III.1 and III.2 in \cite{bassily2014private}]\label{Lemma_exponential_mechanism}

Suppose that $\hat{\theta}$ is sampled from the distribution   $\pi(\theta) \propto e^{- \frac{\eps}{2LR}f(\theta, x)}$ where  $f(\theta, x) = \sum_{i=1}^n \ell_i(\theta,x_i)$  where each $\ell_i : K \times \mathcal{D} \rightarrow \mathbb{R}$ is $L$-Lipschitz and convex, and $K \subseteq B(0,R)$ is convex.
Then $\hat{\theta}$ is $\eps$-differentially private and achieves ERM utility
\begin{equation*}
    \mathbb{E}[f(\hat{\theta}, x) - \min_{\theta \in K} f(\theta, x)] = O\left(\frac{d L R}{\eps}\right).
\end{equation*}
\end{lemma}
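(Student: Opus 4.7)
The plan is to establish the two claims—pure $\eps$-DP and the $O(dLR/\eps)$ utility bound—separately via the standard exponential-mechanism template, specialized to this convex $L$-Lipschitz ERM setup; then cite Bassily--Smith--Thakurta for any identical pieces.

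First, for the privacy claim, the plan is to bound the sensitivity $\Delta := \sup_{x \sim x',\, \theta \in K} |f(\theta,x) - f(\theta,x')|$ by $LR$. Since neighboring datasets $x, x'$ differ at a single index $j$, this reduces to $|\ell_j(\theta, x_j) - \ell_j(\theta, x'_j)|$; using $L$-Lipschitzness of $\ell_j(\cdot, z)$ on $K \subseteq B(0,R)$ together with the standard normalization $\ell_j(0, z) \equiv 0$ (which does not affect $\pi$ because additive constants in $f$ cancel in the normalized density) yields $|\ell_j(\theta, z)| \leq LR$ and the sensitivity bound. Setting $\eta := \eps/(2LR)$, the ratio
\[
\frac{\Pr_{\hat{\theta} \sim \pi_x}[\hat{\theta} \in S]}{\Pr_{\hat{\theta} \sim \pi_{x'}}[\hat{\theta} \in S]} \;=\; \frac{\int_S e^{-\eta f(\theta,x)}\,\mathrm{d}\theta}{\int_S e^{-\eta f(\theta,x')}\,\mathrm{d}\theta}\cdot \frac{\int_K e^{-\eta f(\theta,x')}\,\mathrm{d}\theta}{\int_K e^{-\eta f(\theta,x)}\,\mathrm{d}\theta}
\]
decomposes into two factors, each of which is at most $e^{\eta\Delta} = e^{\eps/2}$ by the pointwise inequality $e^{-\eta f(\theta,x)} \leq e^{\eta\Delta} e^{-\eta f(\theta,x')}$, giving the desired $e^{\eps}$ bound.

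Second, for the utility claim, the plan is to apply the standard "concentration of the exponential mechanism near the optimum" argument. Let $\theta^\ast \in \argmin_{\theta \in K} f(\theta, x)$ and $\opt := f(\theta^\ast, x)$. To accommodate the possibility that $\theta^\ast$ lies on $\partial K$, define the interior approximation $\theta_\alpha := (1-\alpha)\theta^\ast + \alpha a$ for $\alpha \in (0,1)$; by $B(a,r) \subseteq K$ and convexity of $K$ one has $B(\theta_\alpha, \alpha r) \subseteq K$, and by convexity plus $L$-Lipschitzness of $f(\cdot, x)$ one has $f(\theta_\alpha, x) \leq \opt + L \alpha \|a - \theta^\ast\|_2 \leq \opt + 2\alpha L R$. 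For any $t > 0$, bound the sublevel-set volume below by $\mathrm{Vol}(B(\theta_\alpha, t/L)) = c_d (t/L)^d$ whenever $t/L \leq \alpha r$ (using Lipschitzness to extend $f \leq \opt + 2\alpha LR + t$ on that ball), and the superlevel-set volume above by $\mathrm{Vol}(K) \leq c_d R^d$. Combining via the standard ratio estimate,
\[
\Pr_{\hat{\theta} \sim \pi}[f(\hat{\theta}, x) - \opt > t + 2\alpha LR] \;\leq\; \left(\frac{LR}{t}\right)^d e^{-\eta t},
\]
and integrating the tail in $t$ with a convenient choice $\alpha = \Theta(d/(\eta LR)) = \Theta(d/\eps)$ gives $\mathbb{E}[f(\hat{\theta},x) - \opt] = O(d/\eta) + O(\alpha LR) = O(dLR/\eps)$.

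The main obstacle is the boundary case in the utility bound: when $\theta^\ast \in \partial K$, the small-ball lower bound on the sublevel-set volume fails at $\theta^\ast$, so the $\theta_\alpha$ shift and the careful joint optimization of $\alpha$ and $t$ are essential so that the shift penalty $O(\alpha LR)$ stays within the target $O(dLR/\eps)$. The privacy half is routine once the $\Delta \leq LR$ sensitivity bound is established, and the entire argument mirrors Theorems~III.1 and III.2 of \cite{bassily2014private}, which can be cited directly.
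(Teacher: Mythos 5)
The paper does not actually prove this lemma: it is imported verbatim as Theorems III.1 and III.2 of \cite{bassily2014private}, so your closing remark that the result can be ``cited directly'' is exactly what the paper does. The standalone proof you sketch, however, has two genuine problems. First, in the privacy half, your sensitivity bound is off: even with the anchoring $\ell_j(\theta_0,z)=0$ at some $\theta_0\in K$ (note $0$ need not belong to $K$; only $K\subseteq B(0,R)$ is assumed, so you must anchor at a point of $K$), the triangle inequality gives $|\ell_j(\theta,x_j)-\ell_j(\theta,x'_j)|\le 2LR$, not $LR$; feeding $\Delta=2LR$ into your two-factor decomposition at temperature $\eta=\eps/(2LR)$ yields $e^{2\eps}$, not $e^{\eps}$. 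The sharp version of the argument controls the oscillation over $K$ of the single function $g(\theta)=\ell_j(\theta,x_j)-\ell_j(\theta,x'_j)$, which is $2L$-Lipschitz, by $2L\cdot\mathrm{diam}(K)$; matching the stated constant is then a bookkeeping question of how $R$ relates to the diameter parameter in \cite{bassily2014private}, but as written your chain of inequalities does not establish $\eps$-DP.

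Second, and more seriously, the utility half does not deliver $O(dLR/\eps)$. The empirical risk $f(\cdot,x)=\sum_{i=1}^n\ell_i(\cdot,x_i)$ is $nL$-Lipschitz, not $L$-Lipschitz, so both your shift penalty and your sublevel-ball radius acquire factors of $n$. More structurally, the Lipschitz-plus-small-ball argument cannot give the claimed bound: the prefactor $(LR/t)^d$ is valid only while $t/L\le\alpha r$ (you state this constraint but then integrate the tail as if the bound held for all $t$), and with your choice $\alpha=\Theta(d/\eps)$ the cap binds at $t\approx dLr/\eps\ll dLR/\eps$ whenever $r\ll R$, where the bound is vacuous; repairing this forces either an extra logarithmic factor or an $R/r$ (and $n$) dependence, none of which appear in the statement. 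The clean bound uses convexity of $f$ in an essential way, via the standard Gibbs-measure lemma underlying Theorem III.2 of \cite{bassily2014private} (a Kalai--Vempala-type argument): for convex $f$ on a convex body $K$ and $\mu_\eta\propto e^{-\eta f}$ restricted to $K$, one has $\mathbb{E}_{\mu_\eta}[f]-\min_K f\le d/\eta$, proved by the substitution $\theta\mapsto\theta^*+c(\theta-\theta^*)$, which shows the partition function satisfies $Z(\eta)\ge c^d Z(c\eta)$ for $c\in(0,1)$, and then differentiating $\log Z$. With $\eta=\eps/(2LR)$ this immediately gives expected excess risk at most $2dLR/\eps$, with no boundary case, no $n$, and no $r$. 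Your sketch invokes convexity only to step off the boundary, which is not where its power is needed.
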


\begin{proof}[{\bf of Corollary \ref{corr_DP}}]
To prove Corollary \ref{corr_DP}, we first use Algorithm \ref{alg_TV_to_pure} and the Dikin Walk Markov chain from \cite{narayanan2017efficient} to (approximately) sample from the distribution  $\pi(\theta) \propto e^{- \frac{\eps}{2LR}f(\theta, x)}$.
However, if Algorithm \ref{alg_TV_to_pure} has not halted after $t=10\log(\frac{d}{\eps})$ iterations, we stop running Algorithm \ref{alg_TV_to_pure} and instead output $\hat{\theta} = 0 \in K$.

\paragraph{Showing $\eps$-differential privacy.}
Since $f(\theta, x) := \sum_{i=1}^n \ell_i(\theta,x_i)$, where each $\ell_i$ is an $L$-Lipschitz function of $\theta$, we have that $f$ is an  is a $n L$-Lipschitz  function of $\theta$, and hence that $\frac{\eps}{2LR}f$ is a $\frac{n\eps}{R}$-Lipschitz  function of $\theta$.

By Theorem \ref{thm_infinity_divergence_sampler}, Algorithm \ref{alg_TV_to_pure} (with Dikin Walk Markov chain from \cite{narayanan2017efficient} as subroutine), conditional on  Algorithm \ref{alg_TV_to_pure} halting after $t=10\log(\frac{n \eps}{d})$ iterations, 
 outputs a point $\hat{\theta}$ from a distribution $\hat{\nu}$ where
 \begin{equation}\label{eq_w1}
      \mathrm{d}_\infty(\hat{\nu}, \pi) < \eps,
 \end{equation}
 with probability at least 
 \begin{equation}\label{eq_w2}
 \mathbb{P}(\tau \leq t) \geq 1- \left(\frac{2}{3}\right)^{t+1}.     
 \end{equation}
 Otherwise, we output $\hat{\theta} = 0 \in K$.
  To see why the event when our algorithm outputs $\hat{\theta} = 0$ satisfies $\varepsilon$-differential privacy, from Equation \eqref{eq_a6} in the proof of  Theorem \ref{thm_TV_to_inf_divergence} we have that  the probability $\mathbb{P}(\tau = \tau_{\mathrm{max}})$ that Algorithm \ref{alg_TV_to_pure} will reject at all $\tau_{\mathrm{max}}$ iterations satisfies
  
  \begin{equation} \label{eq_revision1}
\left(\frac{1}{2}\right)^{\tau_{\mathrm{max}}} e^{-\frac{\eps}{2}} \leq  \mathbb{P}(\tau = \tau_{\mathrm{max}})  \leq \left(\frac{1}{2}\right)^{\tau_{\mathrm{max}}} e^{\frac{\eps}{2}},\\
\end{equation}
But  $\mathbb{P}(\tau = \tau_{\mathrm{max}})  =  \mathbb{P}(\hat{\theta}= 0)$, since (ignoring events of probability measure zero) Algorithm \ref{alg_TV_to_pure} outputs $\hat{\theta}=0$ if and only if the number of rejections $\tau$ satisfies $\tau = \tau_{\mathrm{max}}$.
Therefore, for any dataset $x \in \mathcal{D}^n$, Inequality \eqref{eq_revision1} implies that the probability $\mathbb{P}(\hat{\theta}= 0) \equiv \mathbb{P}(\hat{\theta}= 0|x)$ that Algorithm \ref{alg_TV_to_pure} outputs the point $0 \in K$ satisfies
  \begin{equation} \label{eq_revision2}
\left(\frac{1}{2}\right)^{\tau_{\mathrm{max}}} e^{-\frac{\eps}{2}} \leq  \mathbb{P}(\hat{\theta}= 0|\, x)  \leq \left(\frac{1}{2}\right)^{\tau_{\mathrm{max}}} e^{\frac{\eps}{2}}.
\end{equation}
 Therefore, for any $x, x' \in \mathcal{D}^n$, Inequality \eqref{eq_revision2} implies that
  $\mathbb{P}(\hat{\theta}= 0|\, x) \leq e^{\eps} \mathbb{P}(\hat{\theta}= 0|\, x')$,
 implying that the event when our Algorithm outputs $\hat{\theta}= 0$ satisfies the definition of $\varepsilon$-differential privacy.
Thus, by Equation \eqref{eq_w1} and Lemma \ref{Lemma_exponential_mechanism},  we have that $\hat{\theta}$ is pure $\eps$-differentially private.

\paragraph{Bounding the ERM utility.}
Moreover, by Equations \eqref{eq_w1}, \eqref{eq_w2} , and Lemma \ref{Lemma_exponential_mechanism}, we have that $\hat{\theta}$ achieves ERM utility
\begin{align*}
    \mathbb{E}[f(\hat{\theta}, x) - \min_{\theta \in K} f(\theta, x)]
    &\leq \mathbb{E}_{\xi \sim \hat{\nu}}[f(\xi, x) - \min_{\theta \in K} f(\theta, x)] + \mathbb{P}(\tau > t) \times [f(0, x) - \min_{\theta \in K} f(\theta, x)]\\
    &\stackrel{\textrm{Eq. }\ref{eq_w1},\ref{eq_w2}}{\leq} e^{\eps}\times \mathbb{E}_{z \sim \pi}[f(z, x) - \min_{\theta \in K} f(\theta, x)] + \left(\frac{2}{3}\right)^{t+1} \times 2nLR\\
    &\leq e^{\eps}\times \mathbb{E}_{z \sim \pi}[f(z, x) - \min_{\theta \in K} f(\theta, x)]+ \left(\frac{2}{3}\right)^{10\log(\frac{n\eps}{d})} \times 2nLR\\
   &\stackrel{\textrm{Lemma }\ref{Lemma_exponential_mechanism}}{\leq}  e^{\eps}\times O\left(\frac{d L R}{\eps}\right) + \frac{d L R}{\eps}\\
   &=O\left(\frac{d L R}{\eps}\right),
\end{align*}
where the second inequality holds since $f$ is $nL$-Lipschitz and $K \subseteq B(0,R)$.

\paragraph{Bounding the number of operations.}
 Moreover, also by Theorem \ref{thm_infinity_divergence_sampler}, the sum $T$ of the number of steps of Dikin Walk \cite{narayanan2017efficient} over all the times it is called by Algorithm \ref{alg_TV_to_pure} is at most $O(t \times (m^2d^3 + m^2d n^2 \eps^2) \times [\eps n + d  \mathrm{log}(\frac{Rd +n \eps d}{r \eps})])$ steps.
 %A
 And each step takes $O(md^{\omega-1})$ arithmetic operations, plus one evaluation of the value of $f$ (and hence $n$ evaluations of functions $\ell_i$).
Thus, the number of steps is at most $T=O((m^2d^3 + m^2d n^2 \eps^2) \times [\eps n + d  \mathrm{log}(\frac{Rd +n \eps d}{r \eps})] \times  \log(\frac{n \eps}{d})$ steps, 
where each step takes $O(md^{\omega-1})$ arithmetic operations, plus one evaluation of the function $f$.
Thus, Algorithm \ref{alg_TV_to_pure} finishes in at most   $T \times md^{\omega-1}$ arithmetic operations plus $T$ evaluations of the function $f$,
where $T=O((m^2d^3 + m^2d n^2 \eps^2) \times [\eps n + d]\times \log^2(\frac{n \eps}{d}))$.

\end{proof}

\subsection{Proof of Corollary \ref{corrolary_low_rank}}\label{sec_proof_of_low_rank_DP}

\begin{proof} [{\bf of Corollary \ref{corrolary_low_rank}}]
\cite{leake2020polynomial} show that one can find a pure $\epsilon$-differentially private rank-$k$ projection $P$ such that $E_P[\langle \Sigma, P \rangle] \geq (1-\delta) \sum_{i=1}^k \lambda_i$ whenever  $\sum_{i=1}^k \lambda_i \geq \frac{dk}{\epsilon \delta}  \log \frac{1}{\delta}$ for any $\delta>0$ and some universal constant $C>0$, where $\lambda_1 \geq \cdots \geq \lambda_d >0$ denote the eigenvalues of $\Sigma$, by generating a sample from a linear (and hence log-Lipschitz) log-concave distribution $\pi$ on a polytope $K$ with infinity-distance error $O(\epsilon)$.

Specifically, their linear log-density $\pi$ has Lipschitz constant $L = d^2(\lambda_1 - \lambda_d)$, (first equation in Section 5.2 in the arXiv version of  \cite{leake2020polynomial}).
Their polytope $K$ is in $\mathbb{R}^{\frac{d(d-1)}{2}}$ and is defined by $m=d(d-1)$ inequalities  (Equations (5) and (6) in the arXiv version of \cite{leake2020polynomial}).
Moreover, $K$ is contained in a ball of radius $R= \sqrt{d}$, ((Lemma 4.7) of \cite{leake2020polynomial})
 and contains a ball of radius $r= \frac{1}{8d^2}$ ((Lemma 4.8) of \cite{leake2020polynomial}).

Applying Theorem \ref{thm_infinity_divergence_sampler} with the above parameters for $m,L,r,R$ and the dimension $\frac{d(d-1)}{2}$, we obtain a sample from the distribution $\pi$ with infinity-distance error $O(\epsilon)$ in a number of arithmetic operations that is {\em logarithmic} in $\frac{1}{\epsilon}$ and polynomial in $d$ and $\lambda_1 - \lambda_d$.
\end{proof}

\begin{remark}[Privacy of running time]
In the proof of Theorem \ref{thm_TV_to_inf_divergence} we also show that the probability distribution of the number of iterations $\tau$ of Algorithm \ref{alg_TV_to_pure} satisfies $\left(\frac{1}{2}\right)^t e^{-\frac{\eps}{2}} \leq  \mathbb{P}(\tau = t)  \leq \left(\frac{1}{2}\right)^t e^{\frac{\eps}{2}}$ for all $t \leq \tau_{\mathrm{max}}$.
This ensures that the number of iterations $\tau$ is $\eps$-pure DP; thus a malicious adversary cannot gain much information about the dataset by measuring the time it takes for Algorithm \ref{alg_TV_to_pure} to finish.
\end{remark}

\section{Conclusion, limitations, and future Work}\label{sec:conc}

To the best of our knowledge, this is the first work that presents an algorithm for sampling from logconcave distributions on convex bodies that comes with infinity-distance bounds and whose running time depends logarithmically on $1/\eps$.
Towards this, the main technical contribution is Algorithm \ref{alg_TV_to_pure} (and Theorem \ref{thm_TV_to_inf_divergence}) which achieves this improved dependence on $\eps$ by taking as input  continuous samples from a convex body with TV bounds and converting them to samples with infinity-distance bounds.
On the other hand, our bounds are polynomial in $LR$, yet there are algorithms for sampling from logconcave distributions $\pi \propto e^{-f}$ on a  convex body  in the total variation distance that are poly-logarithmic in $R$ and do not assume $f$ to be Lipschitz \cite{lovasz2006fast}.
Thus, the main open problem that remains is whether one can also obtain running time bounds for sampling in the infinity-distance which are poly-logarithmic in $R$ and do not require $f$ to be Lipschitz.

Our main result also has direct applications to differentially private optimization (Corollaries \ref{corr_DP} and \ref{corrolary_low_rank}).
Differential privacy is a notion which has been embraced in many technologies in societal contexts where privacy of individuals is a concern.
Hence, we see our work to have a potential of positive societal impact and do not foresee any potential negative societal impacts.

\section*{Acknowledgments}
This research was  supported in part by  NSF CCF-2104528,  CCF-1908347, and CCF-2112665 awards.

\newpage
\bibliography{DP}
\bibliographystyle{plain}

\appendix

\section{Performance of Algorithm \ref{alg_TV_to_pure} on Simple Test Functions}

In this section we implement Algorithm 1 on two distributions: a simple one-dimensional distribution, and a $d=100$ dimensional ``Dirichlet'' distribution.
Simulations were performed in Matlab, on a 1.6 GHz Dual-Core Intel Core i5 2019 Macbook Air laptop.

{\em One-dimensional distribution:} We first investigate the performance of Algorithm \ref{alg_TV_to_pure} on a simple one-dimensional distribution $\pi$, to verify that our algorithm generates points within the required infinity distance.
We do this by generating a histogram of the output of Algorithm \ref{alg_TV_to_pure}, by running Algorithm \ref{alg_TV_to_pure} $10^7$ times.
For this experiment, we choose a simple one-dimensional distribution as this allows us to obtain a more precise estimate of the infinity distance, since the number of points needed to compute the histogram grows exponentially with the dimension.

We consider the target distribution $\pi(\theta) \propto e^{-\frac{1}{2}(3-\theta)},$ with support on $K = [-1,3]$.
And we provide Algorithm \ref{alg_TV_to_pure} with samples from a distribution 
$$\mu(\theta) \propto \begin{cases}
e^{-\frac{1}{2}(3-\theta)} \qquad \theta \in K \backslash ([0.499,0.501] \cup [1.999,2.001] \cup [2.999,3])\\
0 \qquad \textrm{otherwise}.
\end{cases} \\
$$
Note that this distribution $\mu$ satisfies $\|\pi - \mu\|_{\mathrm{TV}} \leq \frac{1}{100}$, and yet  $\mathrm{d}_{\infty}(\pi,\mu) = \sup_{\theta \in K} |\log  \frac{\nu(\theta)}{\pi(\theta)}| = \infty$ since there are points $\theta \in K$ where $\nu(\theta) = 0$ but $\pi(\theta) >0$.

We run Algorithm \ref{alg_TV_to_pure} $10^7$ times (with parameters $\varepsilon = 0.1$, $L= \frac{1}{2}$, $R=4$, and $\Delta = \frac{\varepsilon}{\max(d,LR)}= 0.05$), to generate a histogram of the distribution $\nu$ of the output of Algorithm \ref{alg_TV_to_pure} (Figure \ref{fig_simulation}).
We observe that Algorithm \ref{alg_TV_to_pure}  terminates after an average of 2.1904 iterations, and generates points from a distribution $\nu$ with $\mathrm{d}_{\infty}(\nu, \pi) = 0.1054$, roughly matching the value of the parameter $\varepsilon = \frac{1}{10}$.

\begin{figure}[h]
    \centering
    \includegraphics[width=0.7\textwidth]{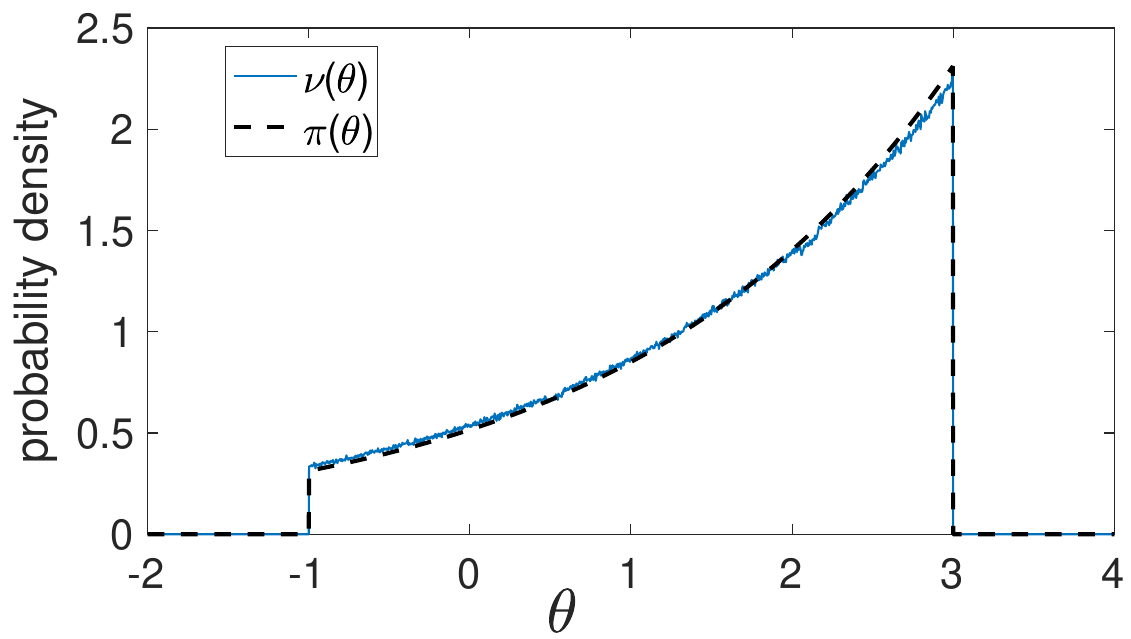}
    \caption{Histogram of the distribution $\nu(\theta)$ of the output of Algorithm \ref{alg_TV_to_pure} (blue curve) for target distribution $\pi(\theta) = e^{-\frac{1}{2}(3-\theta)}$ with support on $K = [-1,3]$ (dashed black curve), when provided with samples from a distribution $\mu$ such that $\|\pi - \mu\|_{\mathrm{TV}} \leq \frac{1}{100}$ and $\mathrm{d}_{\infty}(\pi,\mu) = \infty$.
    Algorithm \ref{alg_TV_to_pure} (with parameter $\varepsilon = 0.1$) terminated after an average of 2.1904 iterations, and generated points from a distribution $\nu$ with $\mathrm{d}_{\infty}(\nu, \pi) = 0.1054$, which roughly matches the choice of parameter $\varepsilon = 0.1$. 
    }
    \label{fig_simulation}
\end{figure}

{\em $100$-dimensional Dirichlet distribution:}  We also implement Algorithm \ref{alg_TV_to_pure}  on a $d=100$ dimensional distribution.
Specifically, we consider the Dirichlet distribution $\pi(\theta) \propto \prod_{i=1}^d \theta_i$ with support on the simplex $K = \{\theta \in \mathbb{R}^d:  \sum_{i=1}^d \theta_i \leq 1, \theta_i \in [0,1] \forall i \in [d] \}$.
And we provide Algorithm \ref{alg_TV_to_pure} with samples from a distribution 
$$\mu(\theta) = \begin{cases}
\pi(\theta) \qquad \theta \notin B(0,\frac{1}{100}) \\
0 \qquad \textrm{otherwise}.
\end{cases} \\
$$
Note that this distribution $\mu$ satisfies $\|\pi - \mu\|_{\mathrm{TV}} < 10^{-d}$, and yet  $\mathrm{d}_{\infty}(\pi,\mu) = \sup_{\theta \in K} |\log  \frac{\nu(\theta)}{\pi(\theta)}| = \infty$ since there are points $\theta \in K$ where $\nu(\theta) = 0$ but $\pi(\theta) >0$.
We observe that Algorithm \ref{alg_TV_to_pure}  terminates after an average of 1.9935 iterations (with the average taken over $10^5$ runs of Algorithm \ref{alg_TV_to_pure}). 
(We do not compute the histogram and infinity distance for the $d=100$ dimensional Dirichlet distribution, since the number of points needed to compute the histogram grows exponentially with $d$.)

\end{document}